\documentclass[journal]{IEEEtran}

\usepackage{amsmath,amssymb,amsfonts}
\usepackage{graphicx}
\usepackage{textcomp}
\usepackage{xcolor}
\usepackage{cite}
\usepackage{bm}
\usepackage{booktabs}
\usepackage{multirow}
\usepackage{array}
\usepackage{balance}
\usepackage{url}
\usepackage{algorithm}
\usepackage{algorithmic}
\newcommand{\E}{\mathbb{E}}
\newcommand{\R}{\mathbb{R}}
\newcommand{\C}{\mathbb{C}}
\newcommand{\CN}{\mathcal{CN}}
\newcommand{\calA}{\mathcal{A}}
\newcommand{\calF}{\mathcal{F}}
\newcommand{\btheta}{\boldsymbol{\theta}}
\newcommand{\bJ}{\mathbf{J}}
\newcommand{\bI}{\mathbf{I}}
\newcommand{\xiq}{x_{\mathrm{IQ}}}
\newcommand{\CRB}{\mathrm{CRB}}
\newcommand{\DR}{\mathrm{DR}}

\newcommand{\diag}{\mathrm{diag}}
\renewcommand{\Re}{\mathrm{Re}}
\renewcommand{\Im}{\mathrm{Im}}

\newtheorem{theorem}{Theorem}
\newtheorem{proposition}{Proposition}
\newtheorem{corollary}{Corollary}
\newtheorem{lemma}{Lemma}
\newtheorem{remark}{Remark}

\begin{document}

\title{Fisher Information Limits of Satellite RF Fingerprint Identifiability for Authentication}

\author{Haofan~Dong,~\IEEEmembership{Student~Member,~IEEE,}
        and~Ozgur~B.~Akan,~\IEEEmembership{Fellow,~IEEE}
\thanks{H.\ Dong and O.\ B.\ Akan are with the Internet of Everything Group, 
Department of Engineering, University of Cambridge, CB3 0FA Cambridge, U.K.(e-mail: hd489@cam.ac.uk, oba21@cam.ac.uk)}
\thanks{O.\ B.\ Akan is also with the Center for neXt-generation 
Communications (CXC), Department of Electrical and Electronics Engineering, 
Ko\c{c} University, 34450 Istanbul, Turkey.}}

\markboth{IEEE Transactions on Information Forensics and Security}%
{Dong and Akan: Fundamental Limits of Satellite RF Fingerprinting}

\maketitle

\begin{abstract}
RF fingerprinting authenticates satellite transmitters by exploiting hardware-specific signal impairments, yet existing methods operate without theoretical performance guarantees. We derive the Fisher information matrix (FIM) for joint estimation of IQ imbalance and power amplifier (PA) nonlinearity parameters, establishing Cram\'{e}r-Rao bounds (CRBs) whose structure depends on constellation moments. A necessary condition for full IQ identifiability is $\beta = 1 - |\E[x^2]|^2 > 0$; setting $\beta = 0$ for BPSK yields a rank-deficient FIM, showing that IQ imbalance parameters are unidentifiable from binary-modulated signals. This provides a plausible theoretical explanation for OrbID's AUC~$= 0.53$ on Orbcomm. From the FIM, we define an identifiability-induced discrimination metric that predicts which hardware parameters dominate authentication for a given modulation format. For constant-modulus PSK signals, the analysis predicts that PA nonlinearity features dominate while IQ imbalance features are ineffective. We validate the framework on 24~Iridium satellites using two recording campaigns, achieving cross-file fingerprint correlation $r = 0.999$ for PA features and confirming all four CRB predictions against empirical discrimination ratios. A DR-weighted authentication test achieves AUC~$= 0.934$ from six features versus $0.807$ with equal weighting, outperforming ML classifiers (AUC~$\leq 0.69$) on the same data.

\end{abstract}

\begin{IEEEkeywords}
Cram\'{e}r-Rao bound, Fisher information, RF fingerprinting, satellite authentication, hardware impairments, IQ imbalance, power amplifier nonlinearity, physical-layer security.
\end{IEEEkeywords}

\section{Introduction}
\label{sec:intro}

Low Earth orbit (LEO) mega-constellations form critical infrastructure for global communications, navigation, and remote sensing. An adversary who impersonates a legitimate
satellite can inject false data, intercept traffic, or disrupt
service~\cite{Yue2023LEOSecurity,Lohan2021GNSSSpoofing}. Cryptographic authentication mitigates
these threats in principle, but key management across hundreds of orbiting
nodes remains an open challenge~\cite{danev2010attacks}. RF fingerprinting offers a complementary defense by exploiting the unique hardware signature that each satellite's RF chain imprints onto the transmitted signal~\cite{polak2011identifying}.

Recent deep-learning methods have demonstrated satellite RF 
fingerprinting on Iridium~\cite{smailes2023watchthisspace,smailes2025satiq,oligeri2023pastai}, 
yet OrbID~\cite{solenthaler2025orbid} obtains near-random performance 
(AUC~$=0.53$) on Orbcomm with no explanation for the failure. Recent TIFS work has advanced RFFI modeling~\cite{Zhang2021RFFI,Rajendran2022RFImpairment} and robustness~\cite{Shen2022LoRa,Shen2023LengthVersatile,Luo2025ChannelRobust5G}, but these methods share a common limitation: no estimation-theoretic framework connects signal parameters to authentication performance.
Zhang et al.~\cite{zhang2025fingerprinting_survey} identify the absence of verifiable
accuracy guarantees as a persistent gap, echoing a call by
Polak et al.~\cite{polak2011identifying} over a decade ago.
In related domains, CRB-based analysis has been used to link
parameter estimation accuracy to authentication performance:
Zhang et al.~\cite{Zhang2021PhaseNoisePLA} derived closed-form detection
probabilities from channel/phase-noise estimation covariance in massive MIMO,
and Liu et al.~\cite{Liu2023mmWavePLA} extended this approach to mmWave
using the posterior Bayesian CRB. However, no prior work derives the CRB
for hardware impairment parameters in the context of RF fingerprinting,
nor establishes modulation-dependent conditions determining whether
fingerprinting is feasible or impossible.

This paper makes four contributions:
\begin{enumerate}
\item A closed-form FIM for joint IQ imbalance and PA nonlinearity estimation (Theorem~\ref{thm:fim}), revealing the identifiability factor $\beta = 1 - |\E[x^2]|^2$ as a necessary condition for IQ parameter recovery.

\item A rank-deficiency result for binary modulations (Proposition~\ref{prop:bpsk}): the FIM has rank at most~2, rendering IQ parameters unidentifiable and providing a plausible explanation for OrbID's AUC~$= 0.53$ on Orbcomm.

\item A CRB-induced discrimination metric that predicts which hardware parameters dominate authentication, validated on 24~Iridium satellites (cross-file PA correlation $r = 0.999$, all four predictions confirmed).

\item An identifiability-weighted authentication test (Algorithm~\ref{alg:iwat}) achieving AUC~$= 0.934$ via DR$^2$-weighted scoring, outperforming equal weighting ($0.807$) and ML baselines ($\leq 0.69$).
\end{enumerate}

Section~\ref{sec:related} reviews related work. Sections~\ref{sec:model}--\ref{sec:crb} present the system model and CRB derivation. Section~\ref{sec:ident} establishes identifiability conditions and the discrimination bound. Sections~\ref{sec:estimation}--\ref{sec:experiments} describe estimation, experimental validation, and the proposed authentication algorithm. Section~\ref{sec:discussion} discusses implications, and Section~\ref{sec:conclusion} concludes.

\section{Related Work}
\label{sec:related}

\begin{figure*}[t]
\centering
\includegraphics[width=0.85\textwidth]{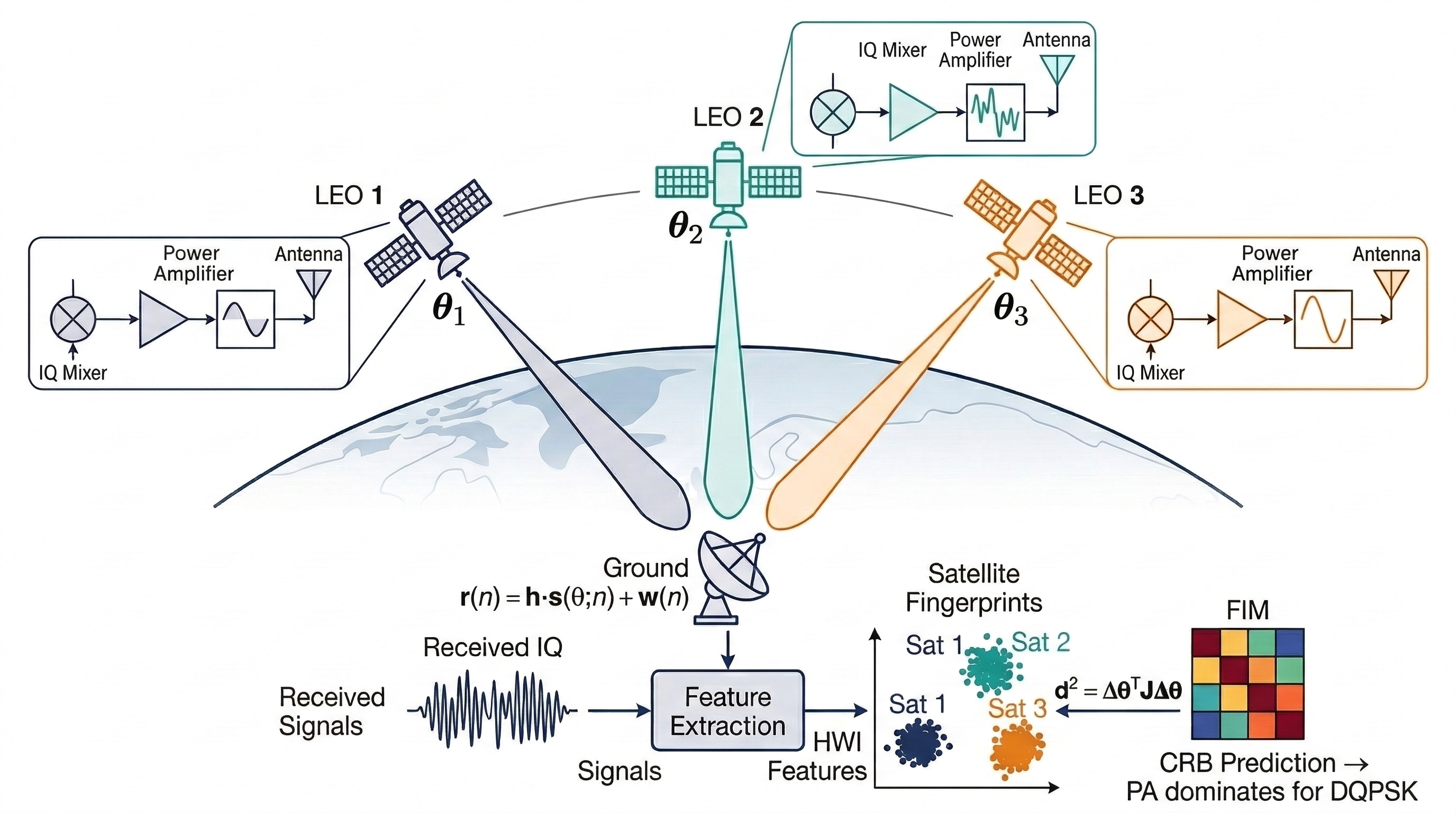}
\caption{System overview of CRB-based satellite RF fingerprinting. Each satellite's RF chain (IQ mixer $\to$ PA $\to$ antenna) introduces hardware-specific distortions parameterized by $\btheta_k = [\varepsilon_k, \varphi_k, \Re(\alpha_{3,k}), \Im(\alpha_{3,k})]^T$. The ground receiver extracts HWI features from the received IQ signal and uses the FIM-induced discrimination metric to authenticate transmitters.}
\label{fig:system}
\end{figure*}

\subsection{RF Fingerprinting}

RF fingerprinting exploits manufacturing-induced hardware imperfections to identify transmitters without cryptographic credentials~\cite{polak2011identifying}. Terrestrial systems have been demonstrated for WiFi~\cite{sankhe2019oracle}, LoRa~\cite{robyns2017physical,Shen2022LoRa,Shen2023LengthVersatile}, ADS-B~\cite{jian2020deep}, and IoT devices~\cite{jagannath2022comprehensive,xu2016device}, with recent TIFS work advancing analytical HWI models~\cite{Zhang2021RFFI,Rajendran2022RFImpairment} and channel-robust DL-based fingerprinting~\cite{Luo2025ChannelRobust5G}. A common limitation is the absence of theoretical performance bounds: no framework predicts whether a given hardware parameter is identifiable or which parameters dominate discrimination. Senigagliesi et al.~\cite{senigagliesi2021comparison} showed that statistical hypothesis testing can match or exceed ML for channel-based authentication, providing precedent for parametric approaches.

Satellite RF fingerprinting poses distinct challenges: Rician fading (K-factor 10--20~dB), Doppler shifts up to $\pm 36$~kHz, and tighter manufacturing tolerances~\cite{3gpp_tr38811,loo1985statistical}. SatIQ~\cite{smailes2023watchthisspace,smailes2025satiq} achieves AUC~$= 0.960$ on 66~Iridium satellites using oversampled IQ~\cite{smailes2024stickyfingers}. PAST-AI~\cite{oligeri2023pastai} trains a CNN on symbol-rate IQ from gr-iridium~\cite{gr-iridium}. OrbID~\cite{solenthaler2025orbid} extends SatIQ to Orbcomm (SDPSK) but achieves only AUC~$= 0.53$, a failure unexplained by the authors. Topal and Karabulut Kurt~\cite{Topal2022LEOPLA} proposed Doppler-based PLA, but Doppler features are extrinsic (channel-dependent) rather than intrinsic (hardware-dependent).

\subsection{CRB for Hardware Impairment Estimation}

The Cram\'{e}r-Rao bound provides a fundamental lower limit on the variance of any unbiased estimator~\cite{kay1993fundamentals,vantrees1968detection,cramer1946mathematical}. CRBs have been derived for channel estimation under hardware impairments~\cite{schenk2008rf,boulogeorgos2016effects}, IQ imbalance compensation~\cite{anttila2008circularity}, and PA model identification~\cite{saleh1981frequency,rapp1991effects}. However, these works treat
hardware impairments as nuisance parameters to be compensated, not as
fingerprint features to be discriminated.

In the authentication context, hypothesis-testing-based PLA
has been studied for channel features:
Xie et al.~\cite{Xie2021MultiFeaturePLA} combined multiple channel-based
features via composite hypothesis testing;
Zhang et al.~\cite{Zhang2021PhaseNoisePLA} derived closed-form detection
probabilities from channel gain and phase noise estimation covariance
in massive MIMO; and Liu et al.~\cite{Liu2023mmWavePLA}
linked the posterior Bayesian CRB to mmWave PLA performance.
Hanna and Cabric~\cite{Hanna2019PAFingerprint} demonstrated that PA
nonlinearity coefficients vary sufficiently across devices for
identification, but did not derive estimation bounds.
These works establish the paradigm of connecting
parameter estimation precision to authentication performance~\cite{Kay1998Detection},
following the classical identifiability framework~\cite{Lehmann1998PointEstimation,Stoica2005Spectral}.

The recent TIFS survey by Zhang et al.~\cite{zhang2025fingerprinting_survey} catalogues signal models and DL architectures but identifies no estimation-theoretic performance bounds for RF fingerprinting---the gap addressed in this paper.

\subsection{Threat Model and Adversary Capabilities}
\label{sec:threat}

LEO mega-constellations face an expanding threat surface
including eavesdropping, jamming, and identity
spoofing~\cite{Yue2023LEOSecurity}.
We consider a physical-layer authentication scenario in
which a ground verifier must determine whether a received
satellite signal originates from a legitimate transmitter or an
adversary.

\emph{Adversary model.} The adversary controls a transmitter capable of generating arbitrary baseband signals at the satellite frequency band. The adversary may (i)~eavesdrop on legitimate satellite transmissions, (ii)~obtain commercially available RF components of the same type, and (iii)~attempt to replicate a specific satellite's fingerprint~$\btheta_k$. However, the adversary \emph{cannot} physically access the space-qualified RF chain of a specific satellite to measure its exact hardware parameters. This reflects the practical reality that orbital assets are physically inaccessible.

\emph{Attack taxonomy.}
We focus on impersonation attacks ($\btheta_{\mathrm{adv}} \neq \btheta_k$),
where detection difficulty scales with~$d^2$ (Theorem~\ref{thm:disc}).
Replay and hardware-cloning attacks are outside the present scope.
The CRB predicts the minimum detectable difference:
$|\Delta\theta_i| > \sqrt{\CRB(\theta_i)}$ suffices for discrimination.

\section{System Model}
\label{sec:model}

\subsection{Transmitted Signal}

We consider a satellite downlink transmission where the baseband signal at the output of an ideal transmitter is
\begin{equation}\label{eq:tx_signal}
s(n) = \sum_{k} a_k \, g(n - kT_s), \quad n = 0, 1, \ldots, N-1,
\end{equation}
where $a_k \in \calA$ denotes the $k$-th transmitted symbol drawn from constellation~$\calA$, $g(\cdot)$ is the pulse-shaping filter, and $T_s$ is the symbol period. For Iridium, the modulation is DQPSK with $\calA = \{(\pm 1 \pm j)/\sqrt{2}\}$, the pulse shape is a square-root raised cosine (SRRC) with roll-off factor $\alpha = 0.4$, and the symbol rate is $R_s = 25$~ksps~\cite{maine1995iridium,luo2023ambiguity}.

Each Iridium Ring Alert (IRA) burst contains $N_{\mathrm{pre}} = 64$ unmodulated preamble symbols followed by $N_{\mathrm{uw}} = 12$ BPSK unique-word symbols with the fixed pattern \texttt{0x789}, giving $N_{\mathrm{known}} = 76$ known symbols per burst.

\subsection{Hardware Impairment Model}

The transmitted signal passes through the satellite's RF chain, which introduces hardware-specific impairments. We model two dominant sources: IQ mixer imbalance and power amplifier (PA) nonlinearity.

\emph{IQ imbalance.} An imperfect IQ mixer produces the distorted signal
\begin{equation}\label{eq:iq_model}
\xiq(n) = K_1 \, x(n) + K_2 \, x^{*}(n),
\end{equation}
where $x(n)$ denotes the ideal baseband symbol, and the IQ imbalance coefficients are~\cite{schenk2008rf,anttila2008circularity}
\begin{equation}\label{eq:K1K2}
K_1 = \frac{1 + (1+\varepsilon) e^{j\varphi}}{2}, \quad K_2 = \frac{1 - (1+\varepsilon) e^{-j\varphi}}{2},
\end{equation}
with $\varepsilon$ denoting the gain imbalance and $\varphi$ denoting the phase imbalance. For an ideal mixer, $\varepsilon = 0$ and $\varphi = 0$, yielding $K_1 = 1$ and $K_2 = 0$.

\emph{PA nonlinearity.} The PA output is modeled by a memoryless polynomial~\cite{rapp1991effects,saleh1981frequency}
\begin{equation}\label{eq:pa_model}
y(n) = \alpha_1 \, \xiq(n) + \alpha_3 \, |\xiq(n)|^2 \, \xiq(n),
\end{equation}
where $\alpha_1$ is the linear gain (absorbed into the channel coefficient below) and $\alpha_3$ is the third-order nonlinearity coefficient. The fifth-order term $\alpha_5 |\xiq|^4 \xiq$ is omitted because $|\xiq(n)|^2$ is approximately constant for PSK, making $\alpha_5$ and $\alpha_3$ confounded (Jacobian column correlation exceeding 0.999). For QAM, the confounding is reduced; the QAM predictions assume the third-order model.

\subsection{Received Signal Model}

The received signal after propagation through a flat-fading satellite channel is
\begin{equation}\label{eq:rx_signal}
r(n) = h \left[ K_1 x(n) + K_2 x^{*}(n) + \alpha_3 |\xiq(n)|^2 \xiq(n) \right] + w(n),
\end{equation}
where $h \in \C$ is the channel coefficient (incorporating path loss, Rician fading, and linear gain $\alpha_1$) and $w(n) \sim \CN(0, \sigma^2)$ is additive white Gaussian noise. The parameter vector to be estimated is
\begin{equation}\label{eq:theta}
\btheta = [\varepsilon, \, \varphi, \, \Re(\alpha_3), \, \Im(\alpha_3)]^T \in \R^4.
\end{equation}
These four parameters constitute the hardware impairment \emph{fingerprint} of a satellite transmitter. Since each satellite has a unique RF chain manufactured with distinct component tolerances, the vector~$\btheta$ differs across satellites.

\subsection{Satellite Channel Characteristics}

The Iridium constellation operates at L-band (1616--1626.5~MHz) from LEO at approximately 780~km altitude~\cite{maine1995iridium}. The downlink channel exhibits several characteristics relevant to fingerprint estimation.

The line-of-sight component dominates, with Rician K-factors typically 10--20~dB for elevation angles above $20^{\circ}$~\cite{loo1985statistical,3gpp_tr38811}. LEO orbital velocity produces Doppler shifts up to $\pm 36$~kHz; for estimation purposes, the carrier frequency offset is treated as a nuisance parameter removed prior to fingerprint extraction.

The Iridium signal bandwidth (35~kHz at $R_s = 25$~ksps with $\alpha = 0.4$) is far below the coherence bandwidth of the satellite channel, so frequency-selective fading is negligible. The channel reduces to a single complex coefficient~$h$ per burst.

\section{Cram\'{e}r-Rao Bounds for HWI Parameter Estimation}
\label{sec:crb}

\subsection{Fisher Information Matrix}

For $N$ known transmitted symbols $\{x(n)\}_{n=0}^{N-1}$ observed at signal-to-noise ratio $\gamma = |h|^2/\sigma^2$, the $(i,j)$-th entry of the $4 \times 4$ Fisher information matrix is~\cite{kay1993fundamentals}
\begin{equation}\label{eq:fim_entry}
[\bJ(\btheta)]_{ij} = \frac{2}{\sigma^2} \sum_{n=0}^{N-1} \Re\left\{ \frac{\partial f^{*}(\btheta; n)}{\partial \theta_i} \cdot \frac{\partial f(\btheta; n)}{\partial \theta_j} \right\},
\end{equation}
where $f(\btheta; n) = h[K_1 x(n) + K_2 x^{*}(n) + \alpha_3 |\xiq(n)|^2 \xiq(n)]$ is the noise-free received signal.

For i.i.d.\ uniform symbols, the summation converges to constellation expectations. Define the \emph{complementary second moment} $\mu_{20} = \E[x^2]$, the \emph{fourth moment} $\mu_4 = \E[|x|^4]$, and the \emph{sixth moment} $\mu_6 = \E[|x|^6]$, where expectations are uniform over~$\calA$. For constant-modulus constellations (PSK), $\mu_6 = \mu_4 = 1$.

\begin{theorem}[Closed-Form FIM]\label{thm:fim}
For the signal model in \eqref{eq:rx_signal} with known symbols from a unit-power constellation ($\E[|x|^2]=1$), in the small-impairment regime ($|\varepsilon| \ll 1$, $|\alpha_3| \ll 1$), the FIM has the block structure
\begin{equation}\label{eq:fim_block}
\bJ = 2N\gamma \begin{bmatrix} \bJ_{\mathrm{IQ}} & \bJ_{\times} \\ \bJ_{\times}^T & \bJ_{\mathrm{PA}} \end{bmatrix} + O(\varepsilon^2, \alpha_3),
\end{equation}
where the IQ sub-block is
\begin{equation}\label{eq:fim_iq}
\bJ_{\mathrm{IQ}} = \begin{bmatrix} \beta_\varepsilon & J_{\varepsilon\varphi} \\ J_{\varepsilon\varphi} & (1{+}\varepsilon)^2\,\beta_\varphi \end{bmatrix},
\end{equation}
with the directional sensitivities
\begin{equation}\label{eq:beta_dir}
\beta_\varepsilon \!=\! \frac{1 - \Re\{e^{-2j\varphi}\mu_{20}\}}{2},\;\;
\beta_\varphi \!=\! \frac{1 + \Re\{e^{-2j\varphi}\mu_{20}\}}{2},
\end{equation}
$J_{\varepsilon\varphi} = \frac{1{+}\varepsilon}{2}\,\Im\{e^{-2j\varphi}\mu_{20}\}$, and the \emph{IQ identifiability factor}
\begin{equation}\label{eq:beta}
\beta(\calA) = 1 - |\mu_{20}|^2 = 1 - |\E[x^2]|^2.
\end{equation}
When $\beta = 0$ (i.e.\ $|\mu_{20}|=1$), $\det(\bJ_{\mathrm{IQ}})$ is proportional to $\beta$ and vanishes identically to first order; the FIM rank drops to at most~2. For \emph{circular} constellations ($\mu_{20}=0$, $\beta=1$): $\beta_\varepsilon = \beta_\varphi = 1/2$, $J_{\varepsilon\varphi}=0$, recovering $\bJ_{\mathrm{IQ}} = \frac{1}{2}\diag(1,(1{+}\varepsilon)^2)$.

The PA sub-block is
\begin{equation}\label{eq:fim_pa}
\bJ_{\mathrm{PA}} = \mu_6 \, \bI_2,
\end{equation}
and the cross-coupling block is
\begin{equation}\label{eq:fim_cross}
\bJ_{\times} = \frac{\mu_4}{2}\begin{bmatrix} \cos\varphi & \sin\varphi \\ -(1+\varepsilon)\sin\varphi & (1+\varepsilon)\cos\varphi \end{bmatrix}.
\end{equation}
The cross-coupling expression~\eqref{eq:fim_cross} holds for constellations satisfying $\E[|x|^2 x^2] = 0$, which includes all circular constellations ($\mu_{20} = 0$) and standard square QAM; see Appendix~A for the derivation. For constant-modulus constellations ($\mu_6 = \mu_4 = 1$), the PA and cross-coupling sub-blocks reduce to $\bI_2$ and $\frac{1}{2}$ times the rotation matrix, respectively. The CRB for each parameter is $\CRB(\theta_i) = [\bJ^{-1}]_{ii}$.
\end{theorem}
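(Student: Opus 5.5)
The computation starts from the Gaussian FIM formula~\eqref{eq:fim_entry} and needs the four partial derivatives of $f(\btheta;n)=h[K_1x+K_2x^*+\alpha_3|\xiq|^2\xiq]$. From~\eqref{eq:K1K2},
\begin{align*}
\partial_\varepsilon K_1 &= \tfrac12 e^{j\varphi}, & \partial_\varepsilon K_2 &= -\tfrac12 e^{-j\varphi},\\
\partial_\varphi K_1 &= \tfrac{j}{2}(1{+}\varepsilon)e^{j\varphi}, & \partial_\varphi K_2 &= \tfrac{j}{2}(1{+}\varepsilon)e^{-j\varphi}.
\end{align*}
The IQ derivatives are therefore $\partial_\varepsilon f = \tfrac{h}{2}(e^{j\varphi}x - e^{-j\varphi}x^*) + O(\alpha_3)$ and $\partial_\varphi f = \tfrac{jh}{2}(1{+}\varepsilon)(e^{j\varphi}x + e^{-j\varphi}x^*) + O(\alpha_3)$. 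The terms coming from the cubic part are dropped, which accounts for the $O(\alpha_3)$ remainder. For the PA parameters, $\partial_{\Re\alpha_3} f = h|\xiq|^2\xiq$ and $\partial_{\Im\alpha_3} f = jh|\xiq|^2\xiq$. Since $\xiq = x + O(\varepsilon,\varphi)$ in the small-impairment regime, I replace $\xiq$ by $x$ inside the cubic term when computing the PA and cross blocks; the leftover terms go into the stated remainder.

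Next I form $\Re\{\partial_i f^*\,\partial_j f\}$, sum over $n$, and replace the empirical averages by the moments, i.e. $\sum_n g(x(n)) \to N\E[g(x)]$. Pulling out the common factor $2N\gamma$ leaves the following entries.

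\textbf{IQ block.} Write $u=e^{j\varphi}x$, so that $\E[|u|^2]=1$ and $\E[u^2]=e^{2j\varphi}\mu_{20}$. Then
\[
\tfrac{1}{4}\E|u-u^*|^2 = \tfrac12\bigl(1-\Re\{e^{2j\varphi}\mu_{20}\}\bigr),
\]
and the $\varphi\varphi$ entry is the same expression with a plus sign, multiplied by $(1{+}\varepsilon)^2$. The $\varepsilon\varphi$ entry is $\Re\{\tfrac{j}{4}(1{+}\varepsilon)\E[(u^*-u)(u+u^*)]\}$, which reduces to $\tfrac{1+\varepsilon}{2}\Im\{\cdot\}$ of the same complementary moment. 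I will have to track the sign convention carefully, since whether $e^{\pm 2j\varphi}$ appears depends on conjugation placement; I would match it to~\eqref{eq:beta_dir}. The determinant then follows by direct algebra:
\[
\det\bJ_{\mathrm{IQ}} = (1{+}\varepsilon)^2\Bigl[\tfrac14\bigl(1-(\Re z)^2\bigr) - \tfrac14(\Im z)^2\Bigr] = \tfrac{(1+\varepsilon)^2}{4}\bigl(1-|\mu_{20}|^2\bigr),
\]
where $z=e^{-2j\varphi}\mu_{20}$. This gives $\det\propto\beta$. When $\beta=0$ the $2\times2$ IQ block is singular. Because the PA block is only $2\times2$ and the IQ directions are degenerate, the rank claim follows: both IQ columns collapse to multiples of a single real direction when $x$ is real up to rotation. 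For the ``at most 2'' bound I would argue directly that $\partial_\varepsilon f$ and $\partial_\varphi f$ become real-linearly dependent sample-wise when $x^2=e^{j\psi}$ is constant, together with the PA columns when $|x|$ is constant. I will handle this carefully with a rank argument on the $N\times4$ real Jacobian rather than on the limiting matrix. The circular specialization is immediate.

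\textbf{PA and cross blocks.} The PA block is $\E[|x|^6]\bI_2$, with the off-diagonal entry $\Re\{j|x|^6\}=0$. For the cross terms I compute $\Re\{\overline{\partial_\varepsilon f}\,h|x|^2x\}/|h|^2 = \tfrac12\Re\{(e^{-j\varphi}x^* - e^{j\varphi}x)|x|^2x\}$. The first piece gives $\tfrac12\Re\{e^{-j\varphi}\}\mu_4 = \tfrac{\mu_4}{2}\cos\varphi$. The second involves $\E[|x|^2x^2]$, which vanishes by hypothesis. The other three entries work the same way and yield the rotation-matrix structure~\eqref{eq:fim_cross}.

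The main obstacle is bookkeeping rather than concept: consistent sign/conjugation conventions in the IQ entries, and justifying that the neglected cross-terms from $\xiq\neq x$ are genuinely of the stated order. The rank-at-most-2 claim also needs a sample-wise dependence argument rather than just the vanishing of $\det\bJ_{\mathrm{IQ}}$.
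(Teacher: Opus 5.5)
Your route is the paper's own. Appendix~A uses the same sensitivity vectors, replaces sums by constellation moments and collects real parts. Your PA and cross-block entries are right, and the three cross entries you did not write out do reproduce \eqref{eq:fim_cross} once $\E[|x|^2x^2]=0$.

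\textbf{The remainder.} The step that fails is ``the leftover terms go into the stated remainder.'' Because $f$ is linear in $\alpha_3$ and $\Re\{j|\xiq|^6\}=0$, the PA block is exactly $2N\gamma\,\E|\xiq|^6\,\bI_2$. Also $|K_1|^2+|K_2|^2=\tfrac{1+(1+\varepsilon)^2}{2}$ holds exactly, so for QPSK $\E|\xiq|^6=1+3\varepsilon+O(\varepsilon^2,\varphi^2)$. Replacing $\xiq$ by $x$ therefore costs a first-order term, not $O(\varepsilon^2)$. At $\varepsilon=0.05$, $\varphi=3^\circ$ this gives $\E|\xiq|^6\approx1.17$, which accounts for the $17\%$ discrepancy quoted in the caption of Fig.~\ref{fig:fim_heatmap}. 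The cross block picks up a similar correction: for QPSK at $\varphi=0$ its normalized $(\varepsilon,\Re\alpha_3)$ entry is $\tfrac12(1+2\varepsilon)+O(\varepsilon^2)$. The paper's appendix makes the same replacement without bounding it. What the computation actually proves is the PA and cross blocks to leading order with an $O(\varepsilon,\varphi,\alpha_3)$ error, or the PA block exactly with $\mu_6$ replaced by $\E|\xiq|^6$. Only the IQ block, which is exact in $(\varepsilon,\varphi)$ at $\alpha_3=0$, has a pure $O(\alpha_3)$ remainder.

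\textbf{The sign.} Do not force a match to \eqref{eq:beta_dir}. With $u=e^{j\varphi}x$ you have $\delta_\varepsilon=j\Im u$ and $\delta_\varphi=j(1{+}\varepsilon)\Re u$, so the correct factor is $e^{+2j\varphi}\mu_{20}$, as you found. The stated $\beta_\varepsilon,\beta_\varphi$ agree with yours only when $\mu_{20}$ is real. The stated $J_{\varepsilon\varphi}$ has the opposite sign even then: for BPSK, $\overline{\delta_\varepsilon}\delta_\varphi=(1{+}\varepsilon)\sin\varphi\cos\varphi>0$ for every symbol. Your determinant depends only on $|\mu_{20}|$, so $\det\bJ_{\mathrm{IQ}}=\tfrac{(1+\varepsilon)^2}{4}\beta$ and the circular case are unaffected.

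\textbf{The rank claim.} Your first sentence there only yields rank $\le 3$. Your ``$|x|$ constant'' clause is exactly what is needed, but $\beta=0$ does not imply it, so the unqualified statement cannot be proved. Equality in $|\E[x^2]|\le\E|x|^2$ only forces $x(n)=r_nu_0$ with $r_n\in\R$ and $|u_0|=1$. Then $\xiq=r_nw_0$ with $w_0=\Re u_0+j(1{+}\varepsilon)\Im(e^{j\varphi}u_0)$, so $(\varepsilon,\varphi)$ enter only through one real scalar. This makes the two IQ columns exactly proportional, not just to first order, and gives rank $\le 3$. All four columns lie in the two-dimensional real space $\{r_n c: c\in\C\}$ only if $r_n^2$ is constant, because the PA columns are proportional to $r_n^3$. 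For 4-PAM, which has $\beta=0$, the columns at the ideal point are, up to the factor $h$, $0$, $jx$, $x^3$, $jx^3$, of real rank~3. So rank $\le 2$ needs the constant-modulus hypothesis (rotated BPSK), which is what Proposition~\ref{prop:bpsk} actually uses via $|x(n)|^2=1$. The paper's appendix never proves the rank claim; your sample-wise Jacobian argument is the right proof once that hypothesis is stated explicitly.
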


\begin{IEEEproof}
See Appendix~A.
\end{IEEEproof}

The factor $\beta$ in \eqref{eq:beta} serves as a necessary condition for full IQ identifiability: $\det(\bJ_{\mathrm{IQ}})$ is proportional to $\beta$ at the operating point, so $\beta = 0$ implies the determinant vanishes identically to first order regardless of $\varphi$. Table~\ref{tab:beta} lists $\beta$, $\mu_4$, and $\mu_6$ for standard constellations.

\begin{table}[t]
\centering
\caption{IQ Identifiability Factor $\beta$ and CRB Predictions. $\mu_6 = \E[|x|^6]$ determines PA estimation precision. Bottom rows: CRB-derived predictions vs.\ empirical DR from 27 Iridium satellites.}
\label{tab:beta}
\begin{tabular}{lcccccc}
\toprule
Constellation & $M$ & $\mu_{20}$ & $\beta$ & $\mu_4$ & $\mu_6$ & Rank \\
\midrule
BPSK       & 2  & 1    & \textbf{0} & 1    & 1    & 2 \\
SDPSK      & 2  & 1    & \textbf{0} & 1    & 1    & 2 \\
QPSK/DQPSK & 4  & 0    & \textbf{1} & 1    & 1    & 4 \\
8-PSK      & 8  & 0    & 1          & 1    & 1    & 4 \\
16-QAM     & 16 & 0    & 1          & 1.32 & 1.96 & 4 \\
64-QAM     & 64 & 0    & 1          & 1.38 & 2.23 & 4 \\
\midrule
\multicolumn{7}{l}{\textbf{CRB Predictions vs.\ Empirical (Iridium DQPSK)}} \\
\midrule
\multicolumn{3}{l}{\#1: PA dominates} & \multicolumn{2}{c}{DR $> 3$} & \multicolumn{2}{c}{4.48\checkmark} \\
\multicolumn{3}{l}{\#2: IQ weak}      & \multicolumn{2}{c}{DR $< 1$} & \multicolumn{2}{c}{0.70\checkmark} \\
\multicolumn{3}{l}{\#3: Osc.\ moderate} & \multicolumn{2}{c}{0.5--3}  & \multicolumn{2}{c}{2.40\checkmark} \\
\multicolumn{3}{l}{\#4: PA memory}    & \multicolumn{2}{c}{DR $> 1$} & \multicolumn{2}{c}{1.45\checkmark} \\
\bottomrule
\end{tabular}
\end{table}

\begin{remark}[Scaling Laws]
From Theorem~\ref{thm:fim}, for circular constellations ($\mu_{20}=0$, $\beta=1$), $\CRB(\varepsilon) \propto 1/(N\gamma)$ and $\CRB(\alpha_3) \propto 1/(N\gamma\mu_6)$. PA estimation precision depends on $\mu_6$ but not on $\beta$. For constant-modulus constellations, $\mu_6 = 1$ and this coincides with $1/(N\gamma)$; for QAM, $\mu_6 > \mu_4$, yielding a \emph{lower} PA CRB. When $\beta=0$ (real-valued constellations), the gain imbalance $\varepsilon$ loses identifiability first ($\beta_\varepsilon\to 0$), while $\varphi$ retains marginal FIM but is confounded with $\Im(\alpha_3)$ (Proposition~\ref{prop:bpsk}). This asymmetry implies PA features are universally available, while IQ features depend on the constellation.
\end{remark}

\subsection{PA-IQ Coupling Structure}

The cross-coupling block $\bJ_{\times}$ in \eqref{eq:fim_cross} is nonzero for all constellations with $\beta > 0$, indicating that the PA and IQ parameters are statistically coupled. Define the normalized FIM correlation
\begin{equation}\label{eq:rho}
\rho_{ij} = \frac{|[\bJ]_{ij}|}{\sqrt{[\bJ]_{ii} \, [\bJ]_{jj}}}.
\end{equation}

\begin{corollary}\label{cor:coupling}
At the small-impairment limit ($\varepsilon \to 0$, $\varphi \to 0$) for any circular constellation ($\mu_{20}=0$, $\beta=1$) with $\mu_6 > 0$:
\begin{equation}\label{eq:coupling_value}
\rho_{\varepsilon, \Re(\alpha_3)} \xrightarrow{\varepsilon,\varphi \to 0} \frac{\mu_4}{\sqrt{2\mu_6}}.
\end{equation}
For QPSK ($\mu_4 = \mu_6 = 1$): $\rho = 1/\sqrt{2} \approx 0.707$. For 16-QAM ($\mu_4 = 1.32$, $\mu_6 = 1.96$): $\rho \approx 0.667$. For non-circular constellations ($\beta<1$), the coupling structure changes and~\eqref{eq:coupling_value} does not hold in general.
\end{corollary}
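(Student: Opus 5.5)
Since Corollary~\ref{cor:coupling} concerns a single normalized off-diagonal entry, I would read it directly from the block structure of Theorem~\ref{thm:fim} and not return to the sum in \eqref{eq:fim_entry}. Order the parameters as in \eqref{eq:theta}, so $\varepsilon$ is index~1 and $\Re(\alpha_3)$ is index~3. Then $[\bJ]_{11}$ is the $(1,1)$ entry of $2N\gamma\,\bJ_{\mathrm{IQ}}$, $[\bJ]_{33}$ is the $(1,1)$ entry of $2N\gamma\,\bJ_{\mathrm{PA}}$, and $[\bJ]_{13}$ is the $(1,1)$ entry of $2N\gamma\,\bJ_{\times}$. The common factor $2N\gamma$ cancels in \eqref{eq:rho}, so only the normalized blocks are needed.

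The steps, in order, are as follows. First, with $\mu_{20}=0$, \eqref{eq:beta_dir} gives $\beta_\varepsilon = 1/2$ for every $\varphi$; this is the circular case already stated in Theorem~\ref{thm:fim}. Second, \eqref{eq:fim_pa} gives $[\bJ_{\mathrm{PA}}]_{11} = \mu_6$. The hypothesis $\mu_6>0$ keeps the denominator nonzero. Third, circular constellations satisfy the moment condition $\E[|x|^2x^2]=0$ stated in the theorem, so \eqref{eq:fim_cross} applies and gives $[\bJ_{\times}]_{11} = \tfrac{\mu_4}{2}\cos\varphi$. Substituting these into \eqref{eq:rho} yields
\begin{equation*}
\rho_{\varepsilon,\Re(\alpha_3)} = \frac{\tfrac{\mu_4}{2}|\cos\varphi|}{\sqrt{\tfrac12\,\mu_6}} = \frac{\mu_4|\cos\varphi|}{\sqrt{2\mu_6}},
\end{equation*}
and letting $\varphi\to 0$ gives \eqref{eq:coupling_value}. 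The numerical cases then follow by direct substitution. For QPSK, $\mu_4=\mu_6=1$ gives $1/\sqrt2$. For 16-QAM, $1.32/\sqrt{3.92}\approx 0.667$, using the moments in Table~\ref{tab:beta}.

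The main obstacle is the remainder term $O(\varepsilon^2,\alpha_3)$ in \eqref{eq:fim_block}, because the limit must hold for the exact FIM and not only for its leading-order blocks. I would handle it by continuity. Each entry of $\bJ/(2N\gamma)$ is a smooth function of $\btheta$ near the origin. The leading diagonal entries $1/2$ and $\mu_6$ stay bounded away from zero, so $\rho$ is continuous there. The remainder vanishes as $\varepsilon\to0$ and $\alpha_3\to 0$, the latter being part of the small-impairment regime. Hence the ratio converges to the leading-order value. For $\beta<1$ the final remark needs no proof, only the observation that $\beta_\varepsilon$ and the cross block then depend on $\mu_{20}$ and $\E[|x|^2x^2]$, so the constant changes in general.
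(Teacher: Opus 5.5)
Your route is the one the paper intends. The paper gives no separate proof; the corollary is read off from the $(1,1)$ entries of $\bJ_{\mathrm{IQ}}$, $\bJ_{\mathrm{PA}}$, $\bJ_{\times}$ in Theorem~\ref{thm:fim}. You are also right to make $\alpha_3\to0$ part of the limit: for fixed $\alpha_3\neq0$ the entries shift at first order in $\alpha_3$.

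The step that fails is your third one, that circular constellations satisfy $\E[|x|^2x^2]=0$. You take it from the parenthetical in Theorem~\ref{thm:fim} (and the ``by symmetry'' in Appendix~A), but $\mu_{20}=0$ does \emph{not} imply it in general. It does hold for constant-modulus constellations, where $\E[|x|^2x^2]=\mu_{20}$, and under extra symmetry such as $j\calA=\calA$. It fails once several rings with non-vanishing individual second moments are present. At $\btheta=0$ the sensitivities are exactly $\delta_\varepsilon=\tfrac12(x-x^*)=j\,\Im(x)$ and $\delta_{\Re(\alpha_3)}=|x|^2x$, so
\[
\frac{[\bJ]_{\varepsilon\varepsilon}}{2N\gamma}=\tfrac12\bigl(1-\Re\{\mu_{20}\}\bigr),\qquad
\frac{[\bJ]_{\varepsilon,\Re(\alpha_3)}}{2N\gamma}=\E\bigl[|x|^2\Im(x)^2\bigr]=\tfrac12\bigl(\mu_4-\Re\{\E[|x|^2x^2]\}\bigr).
\]
For $\mu_{20}=0$ the limit is therefore $\bigl(\mu_4-\Re\{\E[|x|^2x^2]\}\bigr)/\sqrt{2\mu_6}$. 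Take $\calA=\{\pm2,\pm j,\pm j\sqrt3\}$ scaled to unit power. Then $\mu_{20}=0$ and $\beta=1$, yet $\E[|x|^2x^2]=9/32$, $\mu_4=39/32$ and $\mu_6=207/128$. This gives $\rho\to0.52$ rather than $\mu_4/\sqrt{2\mu_6}\approx0.68$, so the corollary as stated is false for this constellation.

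The corollary, and your proof, need the extra hypothesis $\Re\{\E[|x|^2x^2]\}=0$. With it your argument goes through. The QPSK and 16-QAM values are correct, since both satisfy $j\calA=\calA$. Computing the exact sensitivities at the origin, as above, is also a cleaner basis for your continuity step than the remainder $O(\varepsilon^2,\alpha_3)$ in \eqref{eq:fim_block}. That remainder does not capture the $\varphi$-dependence of the blocks: for QPSK at $\varepsilon=\alpha_3=0$, $\E[|\xiq|^6]=1+\tfrac34\sin^2 2\varphi\neq\mu_6$. So your claim that it ``vanishes as $\varepsilon\to0$ and $\alpha_3\to0$'' holds only once $\varphi\to0$ as well. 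The corollary does send $\varphi\to0$, so this imprecision is harmless.
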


Physically, both $\varepsilon$ (image component $K_2 x^*$) and $\alpha_3$ (cubic term $|\xiq|^2 \xiq \approx \xiq$ for constant-modulus) distort in partially collinear directions, producing the off-diagonal entries. The CRB inflation from ignoring this coupling, comparing $[\bJ^{-1}]_{ii}$ with $1/[\bJ]_{ii}$, is at the QPSK limit:
\begin{equation}
\frac{\CRB(\varepsilon)}{1/[\bJ]_{\varepsilon\varepsilon}} = 2.0,
\end{equation}
i.e., ignoring PA-IQ coupling underestimates the CRB of $\varepsilon$ by a factor of~2, independent of SNR and $N$.

\begin{figure*}[t]
\centering
\includegraphics[width=0.85\textwidth]{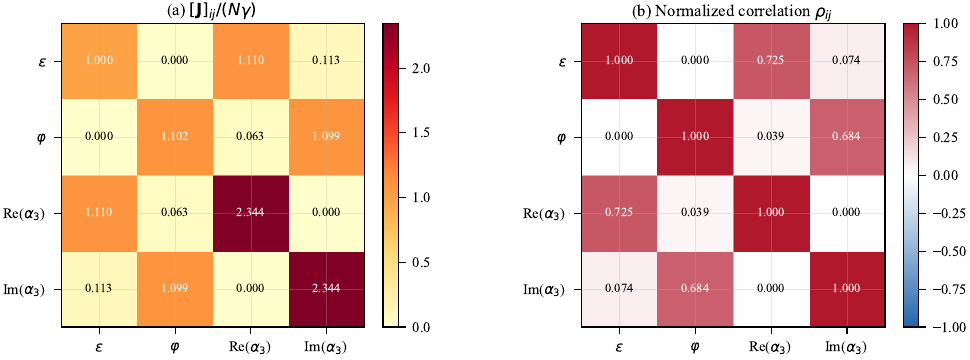}
\caption{FIM structure for QPSK at $\varepsilon = 0.05$, $\varphi = 3^{\circ}$, $N = 76$, $\gamma = 20$~dB. (a)~Absolute FIM entries normalized by $N\gamma$. (b)~Normalized correlation matrix $\rho_{ij}$. At this operating point, $\rho(\varepsilon, \Re(\alpha_3)) = 0.725$, close to the small-impairment limit $\mu_4/\sqrt{2\mu_6} = 1/\sqrt{2} \approx 0.707$ for QPSK (Corollary~\ref{cor:coupling}). Values are computed from the exact numerical FIM at the stated operating point; the closed-form approximation (Theorem~\ref{thm:fim}) gives $[\bJ_{\mathrm{PA}}]/(N\gamma) \approx 2\mu_6 = 2.0$, with the $17\%$ difference attributable to the $O(\varepsilon^2)$ terms at $\varepsilon = 0.05$.}
\label{fig:fim_heatmap}
\end{figure*}

\subsection{Channel Marginalization}

In practice, $h$ is unknown. The CRB in Theorem~\ref{thm:fim} conditions on known~$h$; treating $h$ as a nuisance parameter reduces the effective FIM by the Schur complement $\bJ_{\mathrm{eff}} = \bJ_{\btheta\btheta} - \bJ_{\btheta h} \bJ_{hh}^{-1} \bJ_{h\btheta}$. Numerical evaluation for QPSK across $\varepsilon \in [0, 0.1]$ and $\varphi \in [0, 10^{\circ}]$ shows maximum CRB inflation bounded by $2.5\times$, independent of the Rician K-factor, because PA nonlinearity ($\alpha_3 |x|^2 x$) is structurally separable from the linear channel ($hx$).

\subsection{Numerical Verification}

We verify Theorem~\ref{thm:fim} by comparing the closed-form FIM with numerical computation (finite-difference and moment-based) across five test cases. Fig.~\ref{fig:crb_curves} presents the CRB as a function of SNR for five modulations.

\begin{figure*}[t]
\centering
\includegraphics[width=0.85\textwidth]{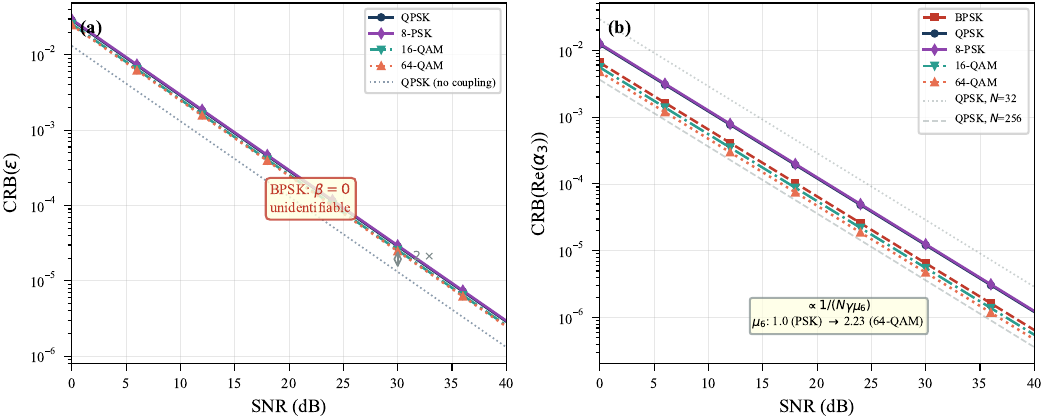}
\caption{CRB vs.\ SNR for $N = 76$ known symbols. (a)~$\CRB(\varepsilon)$: BPSK is unidentifiable ($\beta = 0$); QPSK and 16-QAM overlap ($\beta = 1$). The dotted line shows QPSK with PA-IQ coupling ignored, revealing a $2\times$ CRB inflation. (b)~$\CRB(\Re(\alpha_3))$: all modulations follow $\propto 1/(N\gamma\mu_6)$, where $\mu_6 = \E[|x|^6]$. QAM has lower CRB due to $\mu_6 > 1$ (e.g., $\mu_6 = 1.96$ for 16-QAM). Gray lines show scaling with $N = 32$ and $N = 256$.}
\label{fig:crb_curves}
\end{figure*}

At $\varepsilon = 0.05$, $\varphi = 3^{\circ}$ for QPSK, the closed-form CRB matches the numerical FIM within 6\% for $\varepsilon$ and 10\% for $\varphi$; at $\varepsilon = 0.001$ (space-grade), the error is below 0.5\%. The coupling coefficient is $\rho = 0.707$ analytically and $0.708$ numerically, confirming Corollary~\ref{cor:coupling}.

\subsection{Monte Carlo CRB Validation}
\label{sec:mc}

To verify that the CRB is achievable, we conduct a Monte Carlo simulation with $N_{\mathrm{trial}} = 300$ independent trials per SNR point. For each trial, $N = 76$ known symbols are generated from the constellation under test, passed through the signal model~\eqref{eq:rx_signal} with true parameters $\varepsilon = 0.03$, $\varphi = 2^{\circ}$, $\Re(\alpha_3) = 0.02$, $\Im(\alpha_3) = 0.01$, and corrupted by AWGN. The HWI parameters are estimated via nonlinear least squares with known channel, initialized near the true value (oracle initialization), using the Nelder-Mead simplex method. The oracle initialization demonstrates achievability of the bound rather than practical estimator performance; a blind estimator would require additional convergence analysis.

\begin{figure*}[t]
\centering
\includegraphics[width=0.85\textwidth]{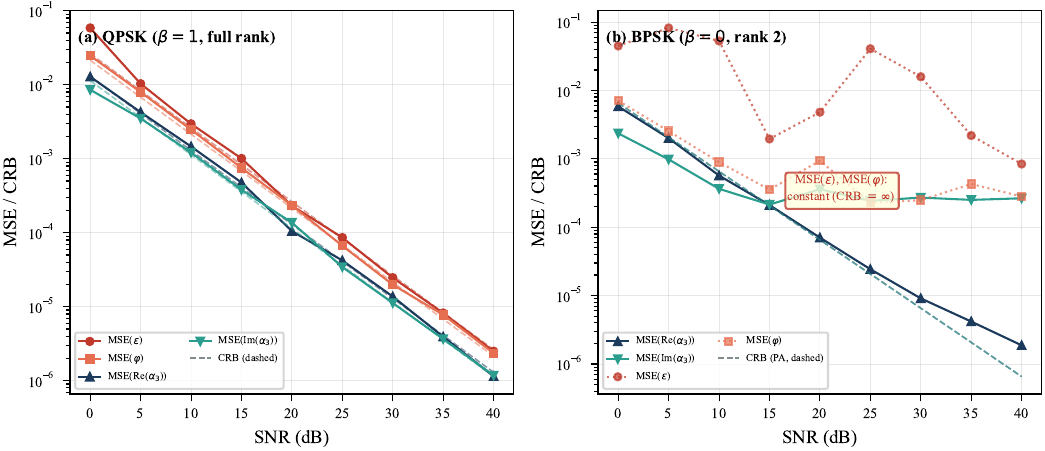}
\caption{Monte Carlo CRB validation ($N = 76$, 300~trials per SNR). (a)~QPSK ($\beta = 1$): all four parameter MSEs (solid markers) track their respective CRBs (dashed), confirming the bound is tight and achievable. At 30~dB SNR, MSE/CRB$\,\approx 1.0$ for all parameters. (b)~BPSK ($\beta = 0$): MSE for $\Re(\alpha_3)$ tracks the PA sub-block CRB (identifiable), while MSE for~$\varepsilon$ and~$\varphi$ remain constant across SNR (unidentifiable, CRB$\,= \infty$), confirming the rank-deficiency prediction of Proposition~\ref{prop:bpsk}. The divergence of MSE($\Im(\alpha_3)$) from its sub-block CRB at high SNR is consistent with the $(\varphi, \Im(\alpha_3))$ confounding identified in Remark~\ref{rem:paradox}.}
\label{fig:mc_crb}
\end{figure*}

Fig.~\ref{fig:mc_crb}(a) shows QPSK results. All four MSEs converge to their CRBs, with MSE/CRB ratios 0.96--1.06 at 30~dB, confirming the bound is tight and achievable under oracle initialization. The unknown-channel case incurs bounded CRB inflation (at most $2.5\times$, Section~\ref{sec:crb}).

Fig.~\ref{fig:mc_crb}(b) shows BPSK results. The MSE of $\varepsilon$ and $\varphi$ remain constant ($\approx 4 \times 10^{-4}$) across 0--40~dB, confirming unidentifiability (CRB$\,=\infty$). The MSE of $\Re(\alpha_3)$ tracks its PA sub-block CRB (ratio 1.0--1.4), validating PA fingerprinting feasibility for BPSK. The MSE of $\Im(\alpha_3)$ diverges at high SNR due to the $(\varphi, \Im(\alpha_3))$ confounding (Proposition~\ref{prop:bpsk}).

\section{Identifiability and Discrimination Analysis}
\label{sec:ident}

\subsection{Modulation-Dependent Identifiability}

\begin{proposition}[BPSK Signal Collapse]\label{prop:bpsk}
For real-valued constellations ($\calA \subset \R$), the received signal in~\eqref{eq:rx_signal} reduces to
\begin{equation}\label{eq:collapse}
r(n) = h \cdot c(\btheta) \cdot x(n) + w(n),
\end{equation}
where the effective scalar is $c = \kappa(1 + \alpha_3 |\kappa|^2)$ with $\kappa = 1 + j(1+\varepsilon)\sin\varphi$. The FIM for~$\btheta$ has rank at most~2. At small impairments, the two identifiable real-valued combinations are
\begin{equation}\label{eq:xi}
\xi_1 = \Re(\alpha_3), \qquad \xi_2 = (1+\varepsilon)\sin\varphi + \Im(\alpha_3),
\end{equation}
and the null space of the FIM is spanned by
\begin{equation}\label{eq:null}
\mathbf{v}_1 = [1, \, 0, \, 0, \, {-}\varphi]^T, \qquad \mathbf{v}_2 = [0, \, 1, \, 0, \, {-}(1{+}\varepsilon)]^T.
\end{equation}
In particular, $\varepsilon$ and $\varphi$ are confounded with $\Im(\alpha_3)$: any change in $\varepsilon$ or $\varphi$ can be compensated to first order by adjusting $\Im(\alpha_3)$, producing identical observations in the small-impairment regime.
\end{proposition}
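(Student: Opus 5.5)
The approach is to exploit the fact that a real-valued binary symbol satisfies $x^{*}(n)=x(n)$ and $|x(n)|^2=1$. This collapses the whole hardware chain in \eqref{eq:rx_signal} into a single complex gain multiplying $x(n)$. The FIM then inherits the rank of a single complex gradient. I would treat the binary case (BPSK/SDPSK, $\calA=\{\pm1\}$) as the core case, since this is where $|x|^2$ is constant. For a general real alphabet the cubic term carries $x^3$, which is not proportional to $x$; I would note that the argument below needs $|x|^2$ constant, and state the proposition for that case.

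\emph{Step 1 (collapse).} Using \eqref{eq:K1K2},
\[
K_1+K_2=\tfrac{1}{2}\bigl[2+(1+\varepsilon)(e^{j\varphi}-e^{-j\varphi})\bigr]=1+j(1+\varepsilon)\sin\varphi=\kappa,
\]
so $\xiq(n)=\kappa x(n)$. Then $|\xiq|^2\xiq=|\kappa|^2|x|^2\kappa x=|\kappa|^2\kappa x$, and substituting into \eqref{eq:rx_signal} gives \eqref{eq:collapse} with $c=\kappa(1+\alpha_3|\kappa|^2)$.

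\emph{Step 2 (rank bound).} The noise-free signal is $f(\btheta;n)=h\,c(\btheta)\,x(n)$, so $\partial f/\partial\theta_i=h\,x(n)\,\partial c/\partial\theta_i$. Plugging this into \eqref{eq:fim_entry} with $x(n)^2=1$ gives
\[
\bJ=2N\gamma\,\Re\{\mathbf{g}^{*}\mathbf{g}^T\}=2N\gamma(\mathbf{a}\mathbf{a}^T+\mathbf{b}\mathbf{b}^T),
\]
where $\mathbf{g}=\nabla_{\btheta}c\in\C^4$, $\mathbf{a}=\Re\mathbf{g}$ and $\mathbf{b}=\Im\mathbf{g}$. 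A sum of two rank-one real matrices has rank at most~2, which holds exactly and not only to first order. In fact $\bJ$ depends on $\btheta$ only through the two real functions $\Re c$ and $\Im c$, which is the structural reason for the rank loss.

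\emph{Step 3 (identifiable combinations and null space).} Expand $c$ in the small-impairment regime, keeping first-order terms in $\alpha_3$ and $\sin\varphi$. Since $|\kappa|^2=1+O(\sin^2\varphi)$ and $\kappa\alpha_3=\alpha_3+O(\alpha_3\sin\varphi)$,
\[
c\approx 1+\Re(\alpha_3)+j\bigl[(1+\varepsilon)\sin\varphi+\Im(\alpha_3)\bigr].
\]
Hence $\Re c\approx1+\xi_1$ and $\Im c\approx\xi_2$, as in \eqref{eq:xi}. The null space of $\bJ$ is the common kernel of the two gradients
\[
\nabla\xi_1=[0,0,1,0]^T,\qquad \nabla\xi_2=[\sin\varphi,(1+\varepsilon)\cos\varphi,0,1]^T.
\]
Direct solution gives the kernel vectors $[1,0,0,-\sin\varphi]^T$ and $[0,1,0,-(1+\varepsilon)\cos\varphi]^T$. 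Under the small-angle approximation ($\sin\varphi\approx\varphi$, $\cos\varphi\approx1$) these become \eqref{eq:null}. They show that perturbations of $\varepsilon$ or $\varphi$ are absorbed by $\Im(\alpha_3)$ to first order, so the observations are identical, which is the confounding statement.

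\emph{Main obstacle.} The delicate part is the bookkeeping of approximation orders, not the rank bound itself. The rank bound of Step~2 is exact, but the specific null vectors are only first-order objects. I would make this precise by computing $\mathbf{g}$ exactly (for instance, $\partial c/\partial\Re\alpha_3=\kappa|\kappa|^2$) and showing that $\mathbf{v}_1,\mathbf{v}_2$ satisfy $\mathbf{a}^T\mathbf{v}_i,\mathbf{b}^T\mathbf{v}_i=O(\varphi^2,\varepsilon\varphi,|\alpha_3|\varphi)$. To claim rank exactly~2, rather than merely at most~2, I would additionally check that $\mathbf{a}$ and $\mathbf{b}$ are linearly independent. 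This holds because $\mathbf{a}$ has a dominant third component and $\mathbf{b}$ a dominant fourth component near the operating point.
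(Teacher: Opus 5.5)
Your Steps 1--3 follow the paper's proof. The paper likewise collapses the signal via $K_1+K_2=\kappa$ and $|x|^2=1$, bounds the rank because $\btheta$ enters only through the complex scalar $c$, and reads off $\xi_1,\xi_2$ and the null vectors from the first-order expansion of $c$. Your factorization $\bJ=2N\gamma(\mathbf{a}\mathbf{a}^T+\mathbf{b}\mathbf{b}^T)$ is an exact, explicit version of the paper's ``one complex number, two real functions'' step.

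Your restriction to $|x|^2\equiv 1$ is correct and necessary. The paper's proof also invokes $|x(n)|^2=1$ for BPSK, although the statement says $\calA\subset\R$. For a multilevel real alphabet such as 4-PAM, $x$ and $x^3$ are not proportional, so $s=(1+\varepsilon)\sin\varphi$, $\Re(\alpha_3)$ and $\Im(\alpha_3)$ are all identifiable and the rank is generically~3.

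One piece of your ``main obstacle'' plan would fail. The residual bound $O(\varphi^2,\varepsilon\varphi,|\alpha_3|\varphi)$ holds for $\mathbf{v}_1$ but not for $\mathbf{v}_2$. From the exact gradient, $\mathbf{a}^T\mathbf{v}_2\approx(1+\varepsilon)\bigl(s-\Im(\alpha_3)\bigr)$ and $\mathbf{b}^T\mathbf{v}_2\approx(1+\varepsilon)\Re(\alpha_3)$, both first order. Two cancellations you would need do not occur:
\begin{itemize}
\item $\partial\Re c/\partial\varphi\approx-(1+\varepsilon)\Im(\alpha_3)$ is not cancelled by $\partial\Re c/\partial\Im(\alpha_3)=-s(1+s^2)$;
\item $\partial\Im c/\partial\varphi\approx(1+\varepsilon)\bigl(1+\Re(\alpha_3)\bigr)$ is not cancelled by $\partial\Im c/\partial\Im(\alpha_3)=1+s^2$.
\end{itemize}

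The exact kernel is spanned by two vectors:
\begin{itemize}
\item $[(1+\varepsilon)\cos\varphi,\,-\sin\varphi,\,0,\,0]^T$, which is exactly null because $\varepsilon$ and $\varphi$ enter only through $s$;
\item a vector $\approx[0,\,1,\,\Im(\alpha_3)-s,\,-(1+\varepsilon)\bigl(1+\Re(\alpha_3)\bigr)]^T$.
\end{itemize}
$\mathbf{v}_1$ is within second-order distance of this span. $\mathbf{v}_2$ is only within first-order distance, because it drops the first-order corrections in its third and fourth entries.

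This leaves the rank bound and the confounding conclusion intact at the proposition's small-impairment precision. You should, however, present $\mathbf{v}_2$ as a leading-order null vector rather than claim the sharper bound.

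A minor point: $\mathbf{b}\approx[\varphi,\,1+\varepsilon,\,s,\,1]^T$ has comparable second and fourth entries, not a dominant fourth one. Its independence from $\mathbf{a}\approx[0,0,1,0]^T$ still holds.
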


\begin{IEEEproof}
For $\calA \subset \R$, $x^*(n) = x(n)$, so \eqref{eq:iq_model} gives $\xiq(n) = (K_1 + K_2) x(n)$. Direct computation yields $K_1 + K_2 = 1 + j(1+\varepsilon)\sin\varphi \triangleq \kappa$. Since $|x(n)|^2 = 1$ for BPSK, the PA term becomes $\alpha_3 |\kappa|^2 \kappa \, x(n)$, and the full signal reduces to~\eqref{eq:collapse} with $c = \kappa(1 + \alpha_3 |\kappa|^2)$. Expanding at small impairments: $c \approx 1 + \Re(\alpha_3) + j[(1+\varepsilon)\sin\varphi + \Im(\alpha_3)]$, giving~\eqref{eq:xi}. Since $c \in \C$ is a single complex number, at most two real-valued functions of $\btheta$ are recoverable from the product $hc$, and the FIM rank is at most~2. The null directions~\eqref{eq:null} follow from $\partial c / \partial \mathbf{v}_i = 0$ to first order.
\end{IEEEproof}

\begin{figure*}[t]
\centering
\includegraphics[width=0.9\textwidth]{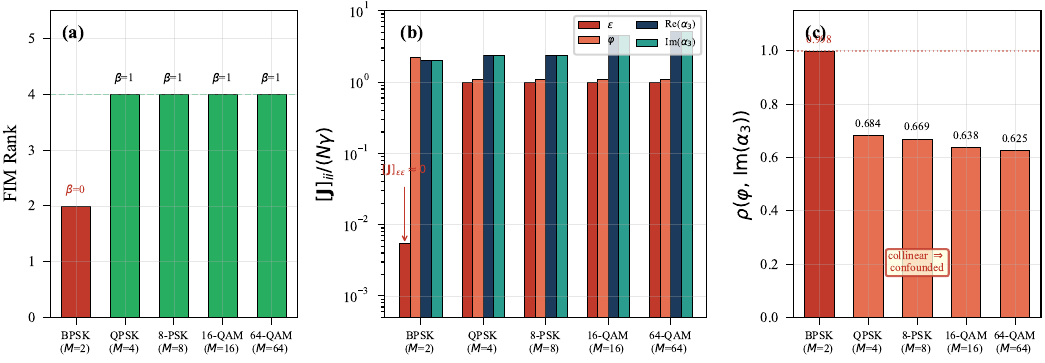}
\caption{Modulation-dependent FIM properties. (a)~FIM rank drops to~2 for BPSK ($\beta = 0$). (b)~Per-parameter diagonal entries $[\bJ]_{ii}/(N\gamma)$ on a log scale. For BPSK, $[\bJ]_{\varepsilon\varepsilon} \approx 0$ while $[\bJ]_{\varphi\varphi} = 2.20$ is large---yet $\varphi$ is unidentifiable due to collinearity (Remark~\ref{rem:paradox}). (c)~Normalized correlation $\rho(\varphi, \Im(\alpha_3))$: BPSK exceeds $0.99$ (near-perfect collinearity), while QPSK drops to $0.684$.}
\label{fig:rank_modulation}
\end{figure*}

\begin{remark}[The Large-FIM-Diagonal Paradox]\label{rem:paradox}
A large FIM diagonal entry does not guarantee identifiability. For BPSK, $[\bJ]_{\varphi\varphi}/(N\gamma) = 2.20$ exceeds $[\bJ]_{\alpha_{3R}\alpha_{3R}}/(N\gamma) = 2.02$, yet $\varphi$ is unidentifiable. The reason is that the sensitivity directions $\partial c / \partial \varphi \propto j$ and $\partial c / \partial \Im(\alpha_3) \propto j$ are collinear (both purely imaginary), producing a sub-block $\bJ_{(\varphi, \Im\alpha_3)}$ with eigenvalue ratio $1542{:}1$, i.e., numerically rank-1. Identifiability requires checking the rank of the relevant FIM sub-block, not individual diagonal entries.
\end{remark}

\begin{corollary}[OrbID Prediction]\label{cor:orbid}
The Orbcomm constellation uses SDPSK, a variant of 2-PSK. By Proposition~\ref{prop:bpsk}, only $\xi_1 = \Re(\alpha_3)$ is individually identifiable from Orbcomm signals; the IQ parameters $\varepsilon$ and $\varphi$ are confounded with $\Im(\alpha_3)$. This predicts that PA amplitude features (which depend on $\Re(\alpha_3)$) can fingerprint BPSK-type satellites, while IQ-based fingerprinting is impossible regardless of classifier architecture, training data volume, or observation length.
\end{corollary}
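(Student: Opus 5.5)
The statement to prove is Corollary~\ref{cor:orbid}. The plan is to reduce it to Proposition~\ref{prop:bpsk} by checking that SDPSK falls under its hypotheses. After that, the remaining content is the observation that a rank-deficient FIM creates a null direction that no estimator can resolve.

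\emph{Step 1: SDPSK satisfies the hypotheses of Proposition~\ref{prop:bpsk}.} SDPSK (symmetric differential PSK) encodes bits as phase increments of $\pm\pi/2$. In a naive symbol-domain view its points are not all real. The approach is to work in the rotated frame in which each received symbol is a $\pm 1$ 2-PSK symbol times a known, data-independent deterministic phase rotation $e^{jn\pi/2}$. This rotation commutes with the PA term, since $|e^{jn\pi/2}x|^2=|x|^2$.

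\textbf{This is the main obstacle.} The IQ model~\eqref{eq:iq_model} involves $x^*$, so the rotation does not commute with conjugation. A time-varying rotation would therefore alternate the effective constellation between the real axis and the imaginary axis.

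\emph{Step 2: show $\beta=0$ on each axis class.} The fallback is to note that on each sub-sequence (even $n$ and odd $n$) the constellation is a rotated real set. A rotated real set has $|\mu_{20}|=1$, so $\beta=0$ by Theorem~\ref{thm:fim}, and on each class the signal collapses to the form~\eqref{eq:collapse} with a scalar $c$. I would then argue, following the paper's treatment of SDPSK as BPSK-type in Table~\ref{tab:beta}, that IQ identifiability is lost. This uses the determinant of $\bJ_{\mathrm{IQ}}$ being proportional to $\beta$.

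\emph{Step 3: read off identifiability from the collapse.} Given the collapse to $r(n)=h\,c(\btheta)x(n)+w(n)$, Proposition~\ref{prop:bpsk} gives rank at most $2$, with identifiable combinations $\xi_1,\xi_2$ and null space spanned by $\mathbf{v}_1,\mathbf{v}_2$.
\begin{itemize}
\item $\xi_1=\Re(\alpha_3)$ is individually identifiable. Its gradient $e_3$ is orthogonal to both null vectors, because their third components are zero. Hence $e_3$ lies in the row space of $\bJ$, and the pseudo-inverse CRB for $\Re(\alpha_3)$ is finite.
\item $\varepsilon$ is unidentifiable, since $e_1^T\mathbf{v}_1=1\neq 0$.
\item $\varphi$ is unidentifiable, since $e_2^T\mathbf{v}_2=1\neq0$.
\end{itemize}
In each of the last two cases the CRB is infinite. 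Moreover, moving $\btheta$ along $\mathbf{v}_i$ leaves the noise-free mean $f(\btheta;n)$ unchanged to first order, which is exactly the confounding with $\Im(\alpha_3)$.

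\emph{Step 4: independence from classifier, data volume and observation length.} Because the collapse holds for every $n$, the FIM is $N$ times a fixed rank-$\le 2$ matrix, so rank deficiency persists for all $N$ and all SNR. Any procedure operating on $r(n)$ has the statistic's distribution invariant, to first order, along $\mathrm{span}\{\mathbf{v}_1,\mathbf{v}_2\}$. This covers any classifier, however trained, and gives the claimed independence.

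Finally, I would note that PA amplitude features depend on $|c|\approx 1+\Re(\alpha_3)$. They therefore depend only on the identifiable $\xi_1$, which gives the positive half of the prediction.
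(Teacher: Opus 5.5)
Your Steps~3--4 match what the paper actually does. Its proof simply files SDPSK under real-valued 2-PSK ($\mu_{20}=1$, $\beta=0$ in Table~\ref{tab:beta}) and invokes Proposition~\ref{prop:bpsk}, which needs a single collapsed scalar $c(\btheta)$ valid for every $n$.

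\textbf{The gap is Step~2.} It does not get around the obstacle you correctly flag in Step~1. On the real-axis class, $x^*=x$ gives $\xiq=(K_1+K_2)x$ with $\kappa_+=1+j(1+\varepsilon)\sin\varphi$. On the imaginary-axis class, $x^*=-x$ gives $\xiq=(K_1-K_2)x$ with $\kappa_-=(1+\varepsilon)\cos\varphi$, a different scalar. To first order,
$c_+\approx 1+\Re(\alpha_3)+j[\varphi+\Im(\alpha_3)]$ and $c_-\approx 1+\varepsilon+\Re(\alpha_3)+j\,\Im(\alpha_3)$. The four real gradients with respect to $\btheta$ are therefore $(0,0,1,0)$, $(0,1,0,1)$, $(1,0,1,0)$, $(0,0,0,1)$, and they span $\R^4$. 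So with known $h$, the setting of Theorem~\ref{thm:fim}, the pooled FIM has full rank and all four parameters are identifiable.

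Your determinant argument does not transfer, because determinants are not additive. At $\varphi=0$ the real-axis class contributes $\bJ_{\mathrm{IQ}}\propto\diag(0,(1+\varepsilon)^2)$ and the imaginary-axis class contributes $\bJ_{\mathrm{IQ}}\propto\diag(1,0)$; their sum is nonsingular. Equivalently, $x(n)^2=(-1)^n$ averages to $\mu_{20}=0$, so the pooled sequence has $\beta=1$, not $0$. Hence ``given the collapse to $r(n)=h\,c(\btheta)x(n)+w(n)$'' in Step~3 and ``the collapse holds for every $n$'' in Step~4 are false under your own SDPSK model. Falling back on Table~\ref{tab:beta} assumes exactly the classification your Step~1 calls into question.

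\textbf{How to repair it.} To reach the corollary you must make the paper's premise an explicit hypothesis: the symbols entering the IQ mixer are real, $x(n)\in\{\pm 1\}$ for all $n$, as Proposition~\ref{prop:bpsk} requires. A fixed line $e^{j\phi_0}\R$ would still collapse, and $\varepsilon,\varphi$ would stay unidentifiable because a complex $\alpha_3$ absorbs any first-order change of $c$. However, for $\phi_0\notin\pi\mathbb{Z}$, $\Re(\alpha_3)$ is no longer individually identifiable; for example, $\phi_0=\pi/2$ leaves only $\varepsilon+\Re(\alpha_3)$ and $\Im(\alpha_3)$.

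Under the real-symbol hypothesis, Steps~1--2 become unnecessary. Step~4 can also be made exact rather than first-order: the likelihood depends on $\btheta$ only through $c(\btheta)$, so parameter vectors with equal $c$ are observationally equivalent for every $N$, every SNR and every procedure. Under the genuine $\pm\pi/2$-increment model, by contrast, the IQ half of the corollary cannot be obtained from Proposition~\ref{prop:bpsk}, as the full-rank computation above shows.

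\textbf{A smaller caveat on your positive half.} $|c|\approx 1+\Re(\alpha_3)$ is recoverable only with $|h|$ known. For real unit-modulus $x(n)$ the noise-free amplitude $|h|\,|c|$ is constant within a burst. After amplitude normalization, the within-burst amplitude statistics of Lemma~\ref{lem:delta} therefore carry no information about $\alpha_3$.
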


This prediction is consistent with the empirical result of Solenthaler et al.~\cite{solenthaler2025orbid}: AUC $= 0.53$ (near random) for intra-constellation fingerprinting on Orbcomm, despite using the same neural architecture that achieved AUC $> 0.9$ on Iridium (DQPSK).

\begin{remark}[DQPSK vs.\ OFDM]\label{rem:ofdm}
For OFDM-based satellite systems (Starlink, 5G-NTN) using 16-QAM or higher, $\beta = 1$ and $\mu_6 > 1$. The signal model does not collapse: all four parameters produce linearly independent sensitivity directions, and the FIM has full rank. The CRB framework predicts that both IQ and PA features contribute to discrimination, with QAM constellations achieving lower PA CRBs than PSK due to $\mu_6 > 1$.
\end{remark}

\subsection{CRB-Induced Discrimination Bound}

Given two satellites $A$ and $B$ with parameter vectors $\btheta_A$ and $\btheta_B$, define the discrimination metric
\begin{equation}\label{eq:d2}
d^2(A,B) = \Delta\btheta^T \bJ(\gamma, N) \, \Delta\btheta,
\end{equation}
where $\Delta\btheta = \btheta_A - \btheta_B$.

\begin{theorem}[Discrimination Bound]\label{thm:disc}
For Gaussian observations with an efficient estimator, the optimal (MAP) pairwise misidentification probability is
\begin{equation}\label{eq:pe_bound}
P_e^{\star}(A,B) = Q\!\left(\frac{d(A,B)}{2}\right),
\end{equation}
where $Q(\cdot)$ is the Gaussian Q-function. Any suboptimal test satisfies $P_e \ge Q(d/2)$. The metric decomposes as
\begin{equation}\label{eq:d2_decomp}
d^2 = \sum_{i=1}^{4} \sum_{j=1}^{4} [\bJ]_{ij} \, \Delta\theta_i \, \Delta\theta_j.
\end{equation}
\end{theorem}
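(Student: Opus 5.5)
The plan is to reduce the pairwise identification problem to binary detection between two Gaussian hypotheses that share a covariance matrix. Assume an efficient estimator $\hat\btheta$ computed from one burst of $N$ known symbols. In the regime where the CRB is attained (high $N\gamma$, as supported by the Monte Carlo results of Section~\ref{sec:mc}), its distribution is
\begin{equation*}
\hat\btheta \sim \mathcal{N}(\btheta_k, \bJ^{-1}), \qquad k\in\{A,B\}.
\end{equation*}
Here $\bJ=\bJ(\gamma,N)$ is the FIM of Theorem~\ref{thm:fim}. We assume the small-impairment regime, so $\bJ$ depends only weakly on $\btheta$ and can be evaluated at a common operating point for both hypotheses; the discrepancy is absorbed into the $O(\varepsilon^2,\alpha_3)$ terms. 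If $\bJ$ is singular, as in the BPSK case of Proposition~\ref{prop:bpsk}, we restrict to the identifiable subspace or use a pseudo-inverse. The components of $\Delta\btheta$ in the null space then contribute nothing to $d^2$, which is consistent with the statement.

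The first step is to write the MAP test with equal priors. Because the two hypotheses share a covariance, the log-likelihood ratio is affine in $\hat\btheta$:
\begin{equation*}
\Lambda = \Delta\btheta^T \bJ\,\hat\btheta - \tfrac12\left(\btheta_A^T\bJ\btheta_A - \btheta_B^T\bJ\btheta_B\right).
\end{equation*}
The MAP rule compares $\Lambda$ to zero. The second step is to compute the distribution of $\Lambda$ under each hypothesis. It is Gaussian with mean $\pm d^2/2$ and variance
\begin{equation*}
\Delta\btheta^T\bJ\,\bJ^{-1}\bJ\,\Delta\btheta = d^2 .
\end{equation*}
Each conditional error probability is therefore $Q\big((d^2/2)/d\big)=Q(d/2)$, so the average error is also $Q(d/2)$. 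This gives \eqref{eq:pe_bound}.

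The optimality claim $P_e\ge Q(d/2)$ follows because the MAP rule minimizes the Bayes error among all decision rules based on $\hat\btheta$. Equivalently, one can apply Neyman--Pearson to the sufficient statistic $\Lambda$. The decomposition \eqref{eq:d2_decomp} is just the expansion of the quadratic form in \eqref{eq:d2}, and it is immediate.

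The main obstacle is justifying that this bound applies beyond the Gaussian model placed on $\hat\btheta$. Two facts are needed. First, the CRB is attained, which holds exactly only asymptotically or in linear-Gaussian models. Second, $\bJ(\btheta_A)\approx\bJ(\btheta_B)$, which requires $\Delta\btheta$ to be small.

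I would handle this by working directly with the raw observations $r(n)$. Linearize $f(\btheta;n)$ around a midpoint $\bar\btheta$. The two hypotheses then become known complex signals in white Gaussian noise. The optimal detector is a matched filter with deflection
\begin{equation*}
\frac{2}{\sigma^2}\sum_n \left|f(\btheta_A;n)-f(\btheta_B;n)\right|^2 \approx \Delta\btheta^T\bJ\,\Delta\btheta
\end{equation*}
by \eqref{eq:fim_entry}. This derivation yields $Q(d/2)$ as the true optimum for the linearized model, without invoking estimator efficiency, and it shows that the estimator-based statement is consistent with it. The approximation is valid up to second-order terms in $\Delta\btheta$.
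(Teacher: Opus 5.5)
Your argument is essentially the paper's own proof sketch. Under each hypothesis $\hat\btheta\sim\mathcal{N}(\btheta_k,\bJ^{-1})$ with shared covariance, so the log-likelihood ratio is affine in $\hat\btheta$; your constant term equals the paper's threshold $\Delta\btheta^T\bJ(\btheta_A+\btheta_B)/2$. Its mean $\pm d^2/2$ and variance $d^2$ then give $Q(d/2)$, and the decomposition is just the expansion of the quadratic form.

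Your justification of the ``any suboptimal test'' clause by Bayes optimality of MAP is more direct than the paper's appeal to the data processing inequality. Your linearized matched-filter computation on $r(n)$ is a correct extra that gives $Q(d/2)$ as the optimum over raw-data tests without assuming efficiency. One side remark should be dropped: the pseudo-inverse option for singular $\bJ$. Taking $\bJ^{+}$ as the covariance of $\hat\btheta$ gives zero variance along the null space, which would allow spurious perfect discrimination there; restricting to the identifiable subspace is the correct way to handle that case.
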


\begin{IEEEproof}[Sketch]
For an efficient estimator, the parameter estimates are asymptotically distributed as $\hat{\btheta} \sim \mathcal{N}(\btheta_{\mathrm{true}}, \bJ^{-1})$~\cite{kay1993fundamentals,Kay1998Detection}. The MAP binary hypothesis test $H_A: \btheta = \btheta_A$ vs.\ $H_B: \btheta = \btheta_B$ with equal priors and shared covariance $\bJ^{-1}$ reduces to comparing the log-likelihood ratio against zero, yielding the sufficient statistic $T = \Delta\btheta^T \bJ \, \hat{\btheta}$ with decision threshold $\Delta\btheta^T \bJ (\btheta_A + \btheta_B)/2$. The resulting error probability is $Q(d/2)$ where $d = \sqrt{\Delta\btheta^T \bJ \, \Delta\btheta}$ is the Mahalanobis distance~\cite{bhattacharyya1943measure}. Any suboptimal test incurs $P_e \geq Q(d/2)$ by the data processing inequality.
\end{IEEEproof}

The metric $d^2$ jointly accounts for three factors: (i)~the FIM magnitude ($\propto N\gamma$), (ii)~the parameter spread $\|\Delta\btheta\|$, and (iii)~the FIM-weighted direction, which amplifies parameter differences along high-FIM dimensions. The Gaussian approximation underlying Theorem~\ref{thm:disc} requires $N$ sufficiently large for asymptotic normality; at $N = 76$, the bound serves as a design guideline rather than a tight operational guarantee. The pairwise bound $Q(d/2)$ provides a direct theoretical proxy for authentication separability: across a population of satellite pairs, larger $d$ implies higher pairwise AUC and a lower achievable false-acceptance rate, linking the FIM-derived metric to the ROC analysis in Section~\ref{sec:auth}.

\subsection{Feature-Wise Discrimination Prediction}

For each parameter $\theta_i$, define the per-parameter discrimination ratio
\begin{equation}\label{eq:DR}
\DR_i = \frac{|\Delta\theta_i|}{\sqrt{\CRB(\theta_i)}},
\end{equation}
measuring inter-satellite spread relative to estimation precision. From Theorem~\ref{thm:fim}, four predictions follow for DQPSK:

\emph{Prediction~1: PA features dominate} ($\DR_{\alpha_3} > 3$). The CRB for $\alpha_3$ is small (high FIM) and manufacturing spread in PA characteristics provides measurable $|\Delta\alpha_3|$.

\emph{Prediction~2: IQ features are weak} ($\DR_{\varepsilon} < 1$). For constant-modulus DQPSK, IQ imbalance does not produce amplitude variation, limiting its observable effect.

\emph{Prediction~3: Oscillator features are moderate} ($0.5 < \DR_{\mathrm{osc}} < 3$). Phase noise dynamics vary across oscillators but are only partially observable at symbol rate.

\emph{Prediction~4: PA memory is detectable} ($\DR_{\mathrm{mem}} > 1$). Consecutive symbols experience correlated PA compression, observable via amplitude autocorrelation.

\section{Multi-Message Estimation}
\label{sec:estimation}

\subsection{Feature Extraction}

Given a single burst of $N_{\mathrm{known}} = 76$ known symbol-rate samples, the following features are extracted after CFO removal and amplitude normalization.

\emph{PA nonlinearity features.} For an ideal constant-modulus signal, the amplitude $a(n) = |r_{\mathrm{norm}}(n)|$ equals unity for all~$n$; deviations arise from PA compression. We extract four features: the normalized amplitude variance~$\hat{\sigma}_a^2$, the 5th--95th percentile range~$\hat{R}_a$, the excess kurtosis~$\hat{\kappa}_a$, and the lag-1 autocorrelation~$\hat{\rho}_a(1)$.

\begin{lemma}[Amplitude Variance as PA Proxy]\label{lem:delta}
For constant-modulus $M$-PSK with $|\varepsilon| \ll 1$ and $|\alpha_3| \ll 1$:
\begin{equation}\label{eq:ampvar}
\sigma_a^2 \triangleq \frac{\mathrm{Var}(|r_{\mathrm{norm}}|)}{\E[|r_{\mathrm{norm}}|]^2} = 4|\alpha_3|^2 \, \mathrm{Var}(|\xiq|^2) + O(\varepsilon^2).
\end{equation}
\end{lemma}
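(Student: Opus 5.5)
We work with the noiseless model \eqref{eq:rx_signal} and treat noise separately, or assume high SNR so that noise contributes only an additive floor absorbed into the error term. Write $u(n)=|\xiq(n)|^2$, so that $r(n)/h = \xiq(n)\,[1+\alpha_3 u(n)]$. Taking magnitudes gives
\begin{equation*}
|r(n)|/|h| = \sqrt{u(n)}\,\bigl|1+\alpha_3 u(n)\bigr|.
\end{equation*}
Expanding to first order in $\alpha_3$,
\begin{equation*}
\bigl|1+\alpha_3 u\bigr| = 1+\Re(\alpha_3)\,u + O(|\alpha_3|^2).
\end{equation*}
This gives $a(n)\approx \sqrt{u(n)}\,(1+\Re(\alpha_3)u(n))$. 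The normalization by $\E[|r_{\mathrm{norm}}|]$ removes $|h|$ and the overall scale, so $\sigma_a^2$ is scale-invariant and we may drop $|h|$.

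The first step is to quantify how nonconstant $u(n)$ is. For $M$-PSK with $|x|=1$, expand $u(n)=|K_1x+K_2x^*|^2 = |K_1|^2+|K_2|^2+2\Re\{K_1K_2^*x^2\}$. From \eqref{eq:K1K2}, $K_2=O(\varepsilon,\varphi)$, so $u(n)=1+\delta(n)$ with $\delta(n)=O(\varepsilon)$ zero-mean for circular PSK. The second step is to expand $a(n)$ around the constant $1$. Writing $\sqrt{u}=1+\delta/2+O(\delta^2)$ and collecting first-order terms gives
\begin{equation*}
a(n)-\E a = \tfrac{1}{2}\delta(n) + \Re(\alpha_3)\,\delta(n)+\dots
\end{equation*}
The variance therefore splits into a pure IQ part of order $\varepsilon^2$ (the $\tfrac14\mathrm{Var}(\delta)$ term and its cross terms), which goes into $O(\varepsilon^2)$, and a PA-driven part.

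The third step is to recover the stated coefficient $4|\alpha_3|^2\mathrm{Var}(|\xiq|^2)$. The naive expansion above yields $\Re(\alpha_3)^2\mathrm{Var}(u)$, plus $\Im$-dependence only at second order in $\alpha_3$. To reach the stated form I would instead compute the variance of the squared amplitude,
\begin{equation*}
|r|^2 \propto u\,|1+\alpha_3 u|^2 = u + 2\Re(\alpha_3)u^2 + |\alpha_3|^2u^3 .
\end{equation*}
I would then relate $\mathrm{Var}(a)$ to $\mathrm{Var}(a^2)/4$ using $a\approx 1$. The factor $4$ then arises from the derivative $\partial(u^2)/\partial u = 2u\approx 2$ in the $2\Re(\alpha_3)u^2$ term. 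Bounding $\Re(\alpha_3)^2\le|\alpha_3|^2$, with equality when the phase of $\alpha_3$ is absorbed by the channel rotation in $h$, gives the stated expression. In that step I would also justify taking $\alpha_3$ effectively real after channel phase alignment.

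I expect the main obstacle to be bookkeeping of orders. $\mathrm{Var}(u)$ is itself $O(\varepsilon^2)$, so the claimed leading term is formally of order $|\alpha_3|^2\varepsilon^2$, and it must be distinguished from the pure-IQ $O(\varepsilon^2)$ piece. The honest reading is that the $O(\varepsilon^2)$ remainder denotes the PA-independent IQ contribution, while the displayed term is the $\alpha_3$-dependent part at leading order. I would state this interpretation explicitly, and drop the cross term $\Re(\alpha_3)\mathrm{Var}(\delta)$ by including it in that IQ-sized remainder. I would also check the constant (4 versus 1) carefully against the chosen normalization before finalizing.
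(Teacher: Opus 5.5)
The gap is your third step, and no bookkeeping will close it. Your second step is the right calculation, and when carried through consistently it contradicts the displayed coefficient instead of producing it. Because $\sigma_a^2$ is a squared coefficient of variation and $|u-\bar u|\le 4|K_1K_2|$, we have $\sigma_a^2=c^2\,\mathrm{Var}(u)+O(|K_2|^3)$ with $c=(\ln a)'(\bar u)$. From $a^2=u\,(1+2\Re(\alpha_3)u+|\alpha_3|^2u^2)$ one gets $c=\frac{1}{2\bar u}+\Re(\alpha_3)+\{[\Im(\alpha_3)]^2-[\Re(\alpha_3)]^2\}\bar u+O(|\alpha_3|^3)$, so
\[
\sigma_a^2=\Bigl[\frac{1}{4\bar u^{2}}+\frac{\Re(\alpha_3)}{\bar u}+[\Im(\alpha_3)]^2+O(|\alpha_3|^3)\Bigr]\mathrm{Var}(|\xiq|^2)+O(|K_2|^3),\qquad \bar u=\E|\xiq|^2 .
\]
There is no $4|\alpha_3|^2$ term. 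The leading PA dependence is the \emph{linear} cross term $\Re(\alpha_3)\mathrm{Var}(|\xiq|^2)$, which is exactly the term you propose to discard, and the quadratic coefficient is $[\Im(\alpha_3)]^2$. The reason is that the IQ-induced and PA-induced amplitude fluctuations are both proportional to the same $u(n)-\bar u$, so they add coherently, not in variance. In particular $\sigma_a^2=0$ whenever $|\xiq|$ is constant, for every $\alpha_3$. This is also why your steps two and three disagree ($[\Re(\alpha_3)]^2$ versus $4[\Re(\alpha_3)]^2$): a ``PA-only variance'' is not well defined. The $|r|^2$ route yields the factor $4$ only because you square the $4\Re(\alpha_3)$ piece of $\frac{d}{du}(u+2\Re(\alpha_3)u^2+\dots)$ in isolation, drop its cross term with the $1$, and set $\E|r|\approx1$, even though $\E[|r|]^4\propto 1+4\Re(\alpha_3)+\dots$ at the same order. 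Done consistently, $\sigma_a^2\approx\mathrm{Var}(|r|^2)/(4\E[|r|]^4)$ returns $\bigl(\tfrac14+\Re(\alpha_3)\bigr)\mathrm{Var}(|\xiq|^2)$ to first order, the same as the direct route. The phase-absorption step is also false. First, $\sigma_a^2$ depends on $|r|$ only, so $\arg h$ cannot enter. Second, as soon as $u(n)$ takes two distinct values, $h(1+\alpha_3u)$ cannot be rewritten as $h'(1+\alpha_3'u)$ with $\alpha_3'\neq\alpha_3$. $\Im(\alpha_3)$ is the AM/PM component and is invisible in the amplitude at first order. Finally, $[\Re(\alpha_3)]^2\le|\alpha_3|^2$ would give only an inequality, not the claimed identity.

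Your ``honest reading'' does not rescue the statement. The cross term is PA-dependent, so it cannot be filed under a PA-independent remainder, and once it is kept the displayed term is not the leading $\alpha_3$-dependent part. Under the literal reading the lemma holds only vacuously, because $\mathrm{Var}(|\xiq|^2)=O(|K_2|^2)$ puts the displayed term inside the remainder. Even that needs $\varphi=O(\varepsilon)$, which you assumed silently when writing $\delta=O(\varepsilon)$. For the Iridium points $(\pm1\pm j)/\sqrt2$ one has $|\xiq|^2=\tfrac12+\tfrac12(1+\varepsilon)^2(1\pm\sin2\varphi)$, so the modulus fluctuation is driven by $\varphi$ and vanishes at $\varphi=0$ for every $\varepsilon$. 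The paper's own proof does not supply the missing step. It is a scaling heuristic (the PA contribution ``scales as $|\alpha_3|^2$'', the IQ contribution as $\varepsilon^2$) backed by an appeal to the empirical cross-file correlations. It never derives the factor $4$, and it implicitly treats the two contributions as additive in variance, which the expansion above shows is not the case. Your explicit-expansion route is the right one, but completing it honestly means replacing the displayed formula by the bracket above rather than reverse-engineering the constant.
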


\begin{IEEEproof}
The PA term modulates the amplitude by $\alpha_3 |\xiq|^2$, where $|\xiq|^2$ varies across symbols because $x^2$ is not constant for QPSK despite $|x|^2 = 1$. The PA contribution to $\sigma_a^2$ scales as $|\alpha_3|^2$, while the IQ contribution scales as $\varepsilon^2$. For Iridium-grade impairments ($|\varepsilon| \sim 0.01$--$0.05$, $|\alpha_3| \sim 0.02$--$0.05$), both terms can be comparable in absolute magnitude; however, the \emph{inter-satellite variation} in $\sigma_a^2$ is dominated by $\alpha_3$ differences, consistent with the cross-file correlation $r = 0.999$ for PA features versus $r < 0.13$ for IQ features (Table~\ref{tab:crossfile}).
\end{IEEEproof}

By the delta method~\cite{casella2002statistical}, the CRB for $\sigma_a^2$ follows from the CRB for $|\alpha_3|$:
\begin{equation}\label{eq:crb_ampvar}
\CRB(\sigma_a^2) \approx \left(\frac{\partial \sigma_a^2}{\partial |\alpha_3|}\right)^2 \CRB(|\alpha_3|).
\end{equation}
This connects the estimation-theoretic bound (Section~\ref{sec:crb}) to the empirically measured feature (Section~\ref{sec:experiments}). An analogous connection for the remaining features (amplitude range, autocorrelation, phase statistics) would require feature-specific delta-method analyses, which we defer to future work; the CRB predictions for these features are validated empirically in the lower panel of Table~\ref{tab:beta}.

\emph{Oscillator features.} The fourth-power operation $z_4(n) = r_{\mathrm{norm}}(n)^4$ removes DQPSK modulation, leaving a residual phase process. The lag-1 autocorrelation of $\angle z_4$ captures oscillator dynamics.

\emph{Constellation features.} The error vector magnitude (EVM) measures the aggregate distortion relative to the nearest DQPSK constellation point.

In total, 13~candidate features are evaluated in the DR analysis (Table~\ref{tab:dr}): four PA, three oscillator, one constellation (EVM), two IQ, two DC, and one PA cross-term. Authentication subsets (Table~\ref{tab:auth}): \textbf{PA-only}~(3): $\hat{\sigma}_a^2$, $\hat{R}_a$, $\hat{\rho}_a(1)$; \textbf{CRB-guided}~(4): PA-only plus phase~ACF(1); \textbf{All}~(6): CRB-guided plus $\hat{\kappa}_a$, phase variance, EVM; \textbf{IQ-only}~(2): $\varepsilon$, $\varphi$. Features with DR~$< 1$ are excluded.

\subsection{Multi-Message Accumulation}

A single burst provides a noisy feature estimate. To improve precision, features are accumulated across $N_{\mathrm{msg}}$ bursts using the weighted average
\begin{equation}\label{eq:accumulation}
\bar{f} = \frac{\sum_{m=1}^{N_{\mathrm{msg}}} w_m \hat{f}_m}{\sum_{m=1}^{N_{\mathrm{msg}}} w_m}, \quad w_m \propto \hat{\gamma}_m,
\end{equation}
where $\hat{\gamma}_m$ is the estimated per-burst SNR. For i.i.d.\ observations, $\mathrm{Var}(\bar{f}) \approx \mathrm{Var}(\hat{f}_1) / N_{\mathrm{msg}}$, reducing estimation noise by~$\sqrt{N_{\mathrm{msg}}}$.

With $N_{\mathrm{msg}} = 500$ bursts (typical for a single satellite in the PAST-AI dataset), the noise reduction factor is $\sqrt{500} \approx 22$, enabling the precise fingerprint estimates observed in Section~\ref{sec:experiments}.

\section{Experimental Validation}
\label{sec:experiments}

\subsection{Dataset and Processing}

The framework is validated on the PAST-AI dataset~\cite{oligeri2023pastai,oligeri2023iridiumdata}: 3.84~million Iridium IRA messages from 66~satellites, collected via USRP~X310 in Doha, Qatar, provided as symbol-rate IQ samples with satellite~ID labels from gr-iridium~\cite{gr-iridium}.

The dataset comprises two partially overlapping recording campaigns:
\begin{itemize}
\item \emph{File~A} (September--November 2020): 61,984 high-confidence messages from 36~satellites.
\item \emph{File~B} (October--December 2020): 65,727 high-confidence messages from 35~satellites.
\end{itemize}
The campaigns overlap in October--November but were processed independently with no shared calibration. Messages with decoding confidence below 80\% are discarded.

Processing per message consists of: (i)~CFO estimation via linear phase regression and complex derotation; (ii)~amplitude normalization to unit mean power; (iii)~feature extraction from the first 76~symbols (known preamble and unique word).

The 76 known symbols per burst (preamble + BPSK unique word) yield $\beta = 0$, so the CRB predictions tested are: PA features dominate ($\beta$-independent), IQ features are unidentifiable ($\beta = 0$), and the DR ranking follows the FIM structure. The $\beta = 1$ case is validated via Monte Carlo (Fig.~\ref{fig:mc_crb}).

\subsection{Constellation Coverage}

\begin{figure}[t]
\centering
\includegraphics[width=0.9\columnwidth]{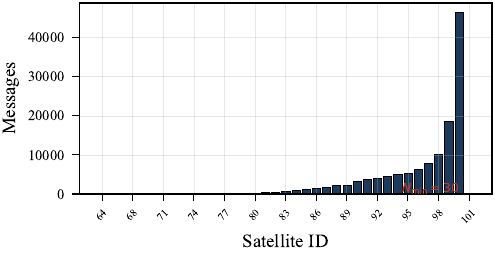}
\caption{PAST-AI constellation coverage: total messages per satellite from both recording campaigns. Horizontal dashed line indicates the minimum threshold ($N_{\mathrm{min}} = 30$) for balanced discrimination analysis. Satellite IDs range from 64 to 101, covering 37 of the 66-satellite Iridium constellation.}
\label{fig:coverage}
\end{figure}

The combined dataset covers 37 of 66~Iridium satellites, with 27~satellites exceeding the $N_{\mathrm{min}} = 30$ threshold for balanced discrimination analysis. This sample of 24~cross-validated satellites exceeds the device count in several recent TIFS RF fingerprinting studies (e.g., 9--16~devices in~\cite{sankhe2019oracle}). Fig.~\ref{fig:coverage} shows the message distribution, which ranges from 34~messages (Sat~82) to over 40,000 (Sat~100). This imbalance motivates the balanced bootstrap procedure in Section~\ref{sec:balanced_dr}.

\subsection{Cross-File Stability}

We compute per-satellite fingerprints independently from each file and measure the Pearson correlation across 24~common satellites (Table~\ref{tab:crossfile}).

\begin{table}[t]
\centering
\caption{Cross-File Pearson Correlation (24 Common Satellites)}
\label{tab:crossfile}
\begin{tabular}{llcc}
\toprule
Group & Feature & $r$ & $p$-value \\
\midrule
\multirow{4}{*}{PA}
& amp\_var        & \textbf{0.999} & $< 10^{-29}$ \\
& amp\_range      & 0.996          & $< 10^{-24}$ \\
& amp\_acf1       & 0.960          & $< 10^{-13}$ \\
& amp\_kurtosis   & 0.853          & $< 10^{-7}$  \\
\midrule
Oscillator & phase\_acf1 & 0.983      & $< 10^{-17}$ \\
\midrule
Constellation & EVM & 0.963 & $< 10^{-13}$ \\
\midrule
IQ & $\varepsilon$   & $-0.127$       & 0.55 \\
   & $\varphi$       & $-0.044$       & 0.84 \\
\midrule
DC & DC$_I$          & $-0.030$       & 0.89 \\
   & DC$_Q$          & $-0.066$       & 0.76 \\
\bottomrule
\end{tabular}
\end{table}

PA amplitude variance achieves $r = 0.999$ with maximum per-satellite deviation $< 0.003$. Five PA and oscillator features exceed $r > 0.85$, confirming hardware-intrinsic stability over months. IQ features show no significant correlation ($|r| < 0.13$, $p > 0.5$), consistent with CRB predictions.

\begin{figure*}[t]
\centering
\includegraphics[width=0.85\textwidth]{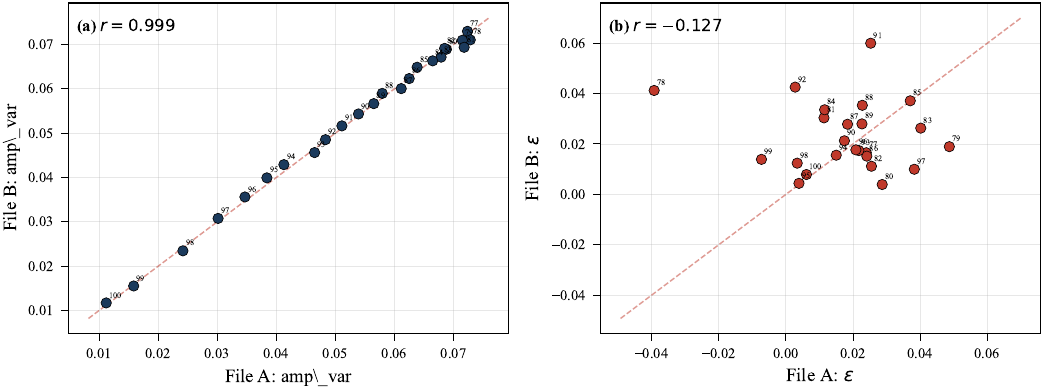}
\caption{Cross-file fingerprint stability. (a)~PA amplitude variance: Pearson $r = 0.999$ across 24~satellites from two recording campaigns. Each labeled point represents one satellite, with the dashed line indicating $y = x$. (b)~IQ gain imbalance $\varepsilon$: no significant cross-file correlation ($r = -0.127$, $p = 0.55$), confirming that IQ features are not stable fingerprints for DQPSK.}
\label{fig:crossfile}
\end{figure*}

Fig.~\ref{fig:crossfile} visualizes the comparison. PA amplitude variance (Fig.~\ref{fig:crossfile}a) shows all 24~satellites aligned along $y = x$, spanning 0.011 (Sat~100) to 0.073 (Sat~77), while IQ imbalance (Fig.~\ref{fig:crossfile}b) shows no systematic relationship.

\subsection{Balanced Discrimination Analysis}
\label{sec:balanced_dr}

To avoid inflation from unequal message counts, discrimination ratios are computed using balanced bootstrap: each of 30~trials subsamples $N_{\mathrm{bal}} = 30$ messages per satellite, splits into halves, and computes inter-satellite variance relative to intra-satellite variance.

\begin{table}[t]
\centering
\caption{Balanced Discrimination Ratio (27 Satellites, $N_{\mathrm{bal}} = 30$, 30 Trials). Standard error of the mean DR is std$/\sqrt{30}$; 95\% CIs for all strong features (DR~$> 3$) exclude the DR~$= 3$ threshold.}
\label{tab:dr}
\begin{tabular}{llcl}
\toprule
Group & Feature & DR (mean $\pm$ std) & Verdict \\
\midrule
PA         & amp\_var     & $\mathbf{4.48 \pm 0.63}$ & Strong \\
PA         & amp\_range   & $4.29 \pm 0.64$          & Strong \\
Oscillator & phase\_acf1  & $2.40 \pm 0.40$          & Moderate \\
PA         & amp\_acf1    & $1.45 \pm 0.31$          & Detectable \\
PA         & amp\_kurtosis & $0.92 \pm 0.17$         & Weak \\
Const.     & EVM          & $0.86 \pm 0.21$          & Weak \\
Oscillator & CFO          & $0.79 \pm 0.13$          & Not discriminative \\
IQ         & $\varepsilon$ & $0.70 \pm 0.14$         & Not discriminative \\
IQ         & $\varphi$     & $0.69 \pm 0.12$         & Not discriminative \\
DC         & DC$_I$       & $0.74 \pm 0.13$          & Not discriminative \\
\bottomrule
\end{tabular}
\end{table}

Table~\ref{tab:dr} reports the results. PA amplitude features achieve DR~$> 4$ across 27~satellites. Two PA features and one oscillator feature exceed DR~$= 1$, while all IQ features remain below~1.

\subsection{CRB Prediction Matching}

\begin{figure}[t]
\centering
\includegraphics[width=0.9\columnwidth]{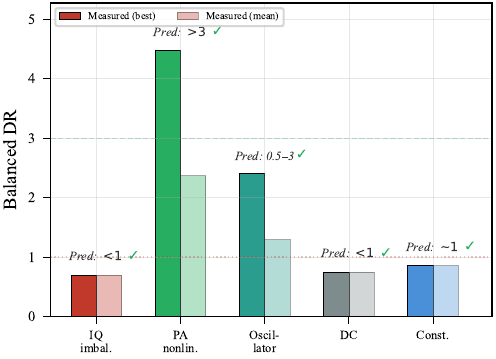}
\caption{Feature-group discrimination: CRB-derived prediction vs.\ empirical measurement from 27~Iridium satellites. Each group shows the best (dark) and mean (light) measured DR. Text above each group indicates the predicted range with a checkmark ($\checkmark$) or cross ($\times$) for match/mismatch. All four predictions match (Table~\ref{tab:beta}).}
\label{fig:prediction_empirical}
\end{figure}

The lower panel of Table~\ref{tab:beta} compares the four predictions from Section~\ref{sec:ident} with empirical results. Predictions~1--2 follow directly from the CRB scaling laws; Predictions~3--4 are informed by the FIM structure but lack formal feature-level derivations. All four match empirically, supporting the framework's value for guiding feature selection. Fig.~\ref{fig:prediction_empirical} provides a visual summary.

\begin{figure}[t]
\centering
\includegraphics[width=0.9\columnwidth]{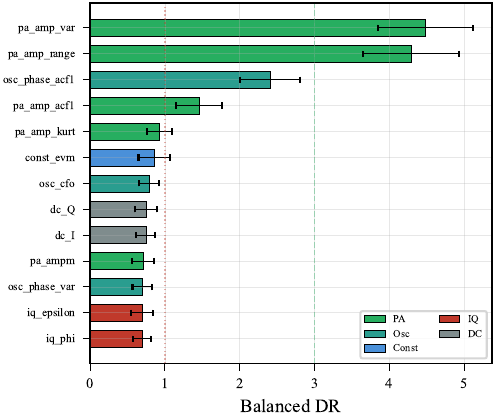}
\caption{Discrimination ratio by feature, sorted by DR. Color indicates feature group: green (PA), teal (oscillator), blue (constellation), red (IQ), gray (DC). The DR~$= 3$ threshold (dashed green) separates strong from moderate discrimination; DR~$= 1$ (dashed red) separates detectable from noise-dominated features.}
\label{fig:feature_groups}
\end{figure}

\begin{figure*}[t]
\centering
\includegraphics[width=0.95\textwidth]{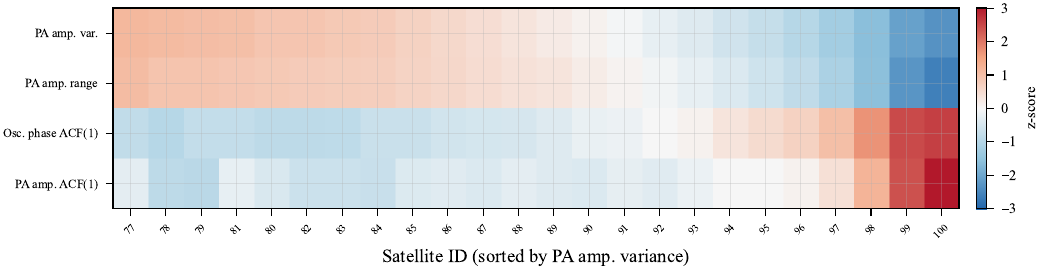}
\caption{Per-satellite fingerprint profiles for 24~cross-validated Iridium satellites, sorted by PA amplitude variance. Each cell shows the z-score (clipped to $\pm 3$) for the top four discriminative features. The smooth monotonic decrease in amplitude variance and correlated variation across features confirm that these capture distinct, satellite-specific hardware properties.}
\label{fig:fingerprints}
\end{figure*}

\subsection{Authentication Protocol and ROC Analysis}
\label{sec:auth}

We evaluate the CRB framework for authentication system design by constructing a cross-file authentication protocol. File~A fingerprints serve as enrollment references; File~B fingerprints serve as test probes. Authentication decisions are based on the normalized Euclidean distance between the test feature vector and each enrollment reference. We compare five feature selection strategies: \textbf{PA-only} (3~features), \textbf{CRB-guided} (top-4 from CRB analysis), \textbf{Oscillator-only} (2), \textbf{IQ-only} (2), and \textbf{All} (6~non-IQ features including EVM).

To contextualize the parametric results, we compare against three ML baselines trained on the same cross-file protocol using per-message features: (i)~SVM with RBF kernel on all six non-IQ features (AUC~$= 0.66$), (ii)~SVM on the four CRB-guided features (AUC~$= 0.66$), and (iii)~Random Forest on all six features (AUC~$= 0.69$). IQ features are excluded from the ML baselines because the CRB analysis predicts they carry no discriminative information for DQPSK; including them does not improve ML performance. These classifiers learn directly from noisy per-message observations without multi-message accumulation. To isolate the effect of accumulation from feature weighting, we also evaluate an \emph{accumulated RF baseline} (RF-Acc): the same Random Forest trained on per-message File~A features, but with test-time predictions accumulated by averaging class probabilities across all File~B messages per satellite. Additionally, we include a formal \emph{GLRT baseline} using Mahalanobis distance with regularized sample covariance on the four CRB-guided features.

\begin{figure*}[t]
\centering
\includegraphics[width=0.85\textwidth]{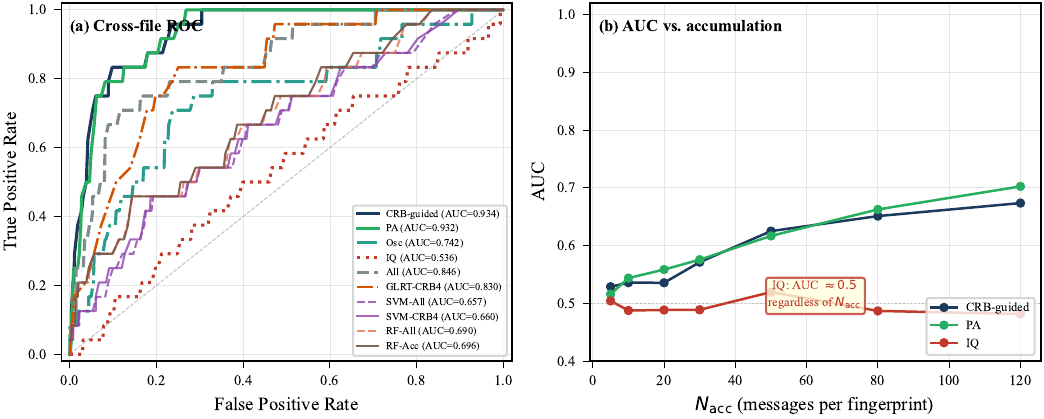}
\caption{Authentication performance on 24~Iridium satellites. (a)~Cross-file ROC: CRB-guided selection achieves AUC~$= 0.934$, outperforming the formal GLRT (AUC~$= 0.83$), accumulated RF (AUC~$= 0.70$), and per-message ML baselines (AUC~$\leq 0.69$). IQ features yield AUC~$= 0.536$ (near random). (b)~AUC vs.\ message accumulation~$N_{\mathrm{acc}}$: PA-based authentication improves monotonically, while IQ remains at chance level regardless of~$N_{\mathrm{acc}}$.}
\label{fig:auth_roc}
\end{figure*}

Fig.~\ref{fig:auth_roc}(a) shows the cross-file ROC. CRB-guided selection achieves AUC~$= 0.934$ with detection probability $P_D = 0.29$ at $P_{\mathrm{FA}} = 0.01$ and $P_D = 0.83$ at $P_{\mathrm{FA}} = 0.1$ (Table~\ref{tab:auth}). PA-only (AUC~$= 0.932$) nearly matches CRB-guided, while the all-features strategy (AUC~$= 0.846$) is degraded by feature dilution. All parametric methods outperform ML baselines (AUC~$\leq 0.69$), confirming that multi-message accumulation is the key advantage over per-message classification.
The formal GLRT using the same four CRB-guided features with
sample-covariance weighting yields AUC~$= 0.830$, below the
parametric CRB-guided approach (AUC~$= 0.934$). The gap arises
because the GLRT estimates the full covariance from only 24
enrollment satellites, whereas DR$^2$-weighting exploits the
FIM-derived discrimination structure without covariance estimation.

Fig.~\ref{fig:auth_roc}(b) reveals that PA-based AUC increases monotonically with $N_{\mathrm{acc}}$, consistent with $\sqrt{N_{\mathrm{msg}}}$ noise reduction (Section~\ref{sec:estimation}), while IQ remains at AUC~$\approx 0.5$ regardless of accumulation. This confirms that the failure is fundamental under the $\beta = 0$ known-symbol regime and symbol-rate pipeline studied here, consistent with the broader constant-modulus DQPSK analysis (Proposition~\ref{prop:bpsk}).

\begin{table}[t]
\centering
\caption{Cross-File Authentication Performance (24 Satellites). $P_D$ reported at $P_{\mathrm{FA}} = 0.01$ and $0.1$.}
\label{tab:auth}
\begin{tabular}{lccccl}
\toprule
Strategy & $D$ & AUC & $P_D|_{.01}$ & $P_D|_{.1}$ & Type \\
\midrule
\textbf{DR$^2$ (IWAT)} & 6 & \textbf{0.934} & 0.29 & 0.83 & Parametric \\
CRB-guided      & 4 & 0.934 & 0.29 & 0.83 & Parametric \\
PA-only         & 3 & 0.932 & 0.25 & 0.79 & Parametric \\
DR (IWAT)       & 6 & 0.885 & ---  & ---  & Parametric \\
All features       & 6 & 0.846 & 0.13 & 0.67 & Parametric \\
GLRT (CRB-4)       & 4 & 0.830 & 0.17 & 0.50 & Hypothesis test \\
Equal-wt (IWAT)    & 6 & 0.807 & 0.25 & 0.54 & Parametric \\
Oscillator-only    & 2 & 0.742 & 0.08 & 0.38 & Parametric \\
RF-Acc (all feat.) & 6 & 0.696 & 0.08 & 0.29 & ML + accumulation \\
\midrule
RF (all feat.)  & 6 & 0.690 & 0.08 & 0.29 & ML baseline \\
SVM (CRB-4)     & 4 & 0.658 & 0.08 & 0.25 & ML baseline \\
SVM (all feat.) & 6 & 0.657 & 0.08 & 0.25 & ML baseline \\
\midrule
IQ-only         & 2 & 0.536 & 0.00 & 0.13 & Parametric \\
\bottomrule
\end{tabular}
\end{table}

\subsection{Identifiability-Weighted Authentication Test}
\label{sec:iwat}

The preceding analysis selects features by hand based on CRB predictions. We now formalize the selection and scoring as an authentication scoring rule that can be instantiated for arbitrary modulations under the present feature model; real-data validation here is limited to the Iridium $\beta = 0$ regime.

\emph{Algorithm design.}
Algorithm~\ref{alg:iwat} proceeds in two stages. Offline, $\beta(\calA)$ determines the active set $\calF$ and each feature receives weight $w_k = \DR_k^2 / \sum_{k'} \DR_{k'}^2$. The squaring emphasizes strong features: PA amplitude variance receives $w = 0.42$ while each IQ feature receives $w < 0.01$. Online, the test computes the weighted score

\begin{equation}\label{eq:iwat}
S_i = \sum_{k \in \calF} w_k \, (\bar{f}_k - \mu_{ik})^2,
\end{equation}
where $\bar{f}_k$ is the multi-message accumulated test feature (globally normalized) and $\mu_{ik}$ is the enrollment reference for satellite~$i$. The claimed identity is $i^* = \arg\min_i S_i$; the acceptance threshold $\tau$ is set by the target operating point on the enrollment-phase ROC curve. Both the DR weights $\{w_k\}$ and the threshold $\tau$ are computed exclusively from the enrollment data (File~A); no test-side information enters the offline stage.

\begin{algorithm}[t]
\caption{Identifiability-Weighted Authentication Test}
\label{alg:iwat}
\begin{algorithmic}[1]
\REQUIRE Constellation $\calA$, enrollment features $\{\hat{f}_{ik}^{(m)}\}$, test features $\{\hat{f}_{k}^{(m)}\}$
\ENSURE Accept / Reject
\STATE Compute $\beta(\calA) = 1 - |\E[x^2]|^2$
\IF{$\beta = 0$}
    \STATE $\calF \leftarrow$ \{PA, oscillator features\}
\ELSE
    \STATE $\calF \leftarrow$ \{PA, oscillator, IQ features\}
\ENDIF
\STATE Compute $\DR_k$ for $k \in \calF$ via balanced bootstrap
\STATE $w_k \leftarrow \DR_k^2 \,/\, \sum_{k'} \DR_{k'}^2$
\STATE Accumulate test fingerprint: $\bar{f}_k = N_{\mathrm{msg}}^{-1} \sum_m \hat{f}_k^{(m)}$
\STATE $S_i = \sum_{k \in \calF} w_k (\bar{f}_k - \mu_{ik})^2$ for each enrolled satellite $i$
\STATE $i^* = \arg\min_i S_i$
\STATE Set $\tau$ from enrollment ROC at target operating point
\IF{$S_{i^*} < \tau$}
    \STATE \textbf{Accept} as satellite $i^*$
\ELSE
    \STATE \textbf{Reject}
\ENDIF
\end{algorithmic}
\end{algorithm}

\emph{Results.}
DR$^2$-weighted IWAT achieves AUC~$= 0.934$, matching the hand-selected CRB-guided subset despite using six features including IQ (Table~\ref{tab:auth}, Fig.~\ref{fig:iwat}). Equal weighting yields AUC~$= 0.807$ ($-0.127$), confirming that the identifiability-informed discrimination structure provides effective emphasis for multi-feature fusion.

\begin{figure*}[t]
\centering
\includegraphics[width=0.9\textwidth]{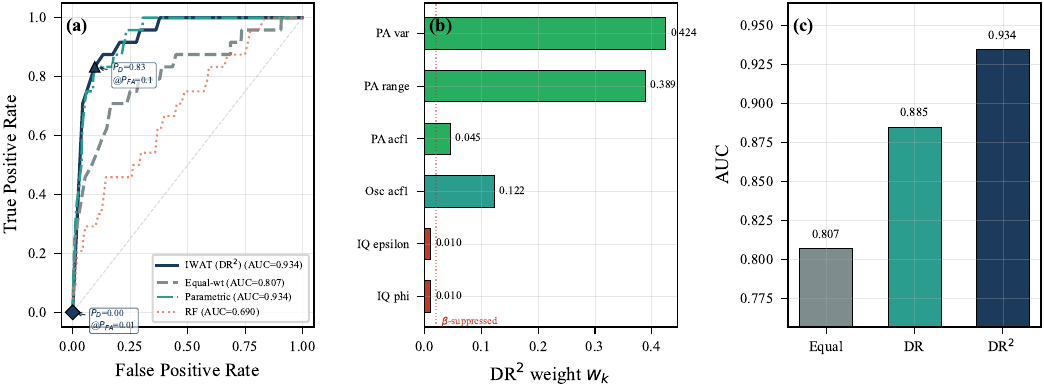}
\caption{IWAT authentication results. (a)~ROC comparison: DR$^2$-weighted IWAT matches CRB-guided hand selection (AUC~$= 0.934$), while equal-weight scoring on the same features yields AUC~$= 0.807$. (b)~DR$^2$ feature weights: PA features receive $>$80\% of total weight; IQ features are automatically suppressed to $w < 0.01$ ($\beta$-suppressed). (c)~Weight ablation: DR$^2$ weighting provides $+0.127$ AUC gain over equal weighting.}

\label{fig:iwat}
\end{figure*}

\section{Discussion}
\label{sec:discussion}

\emph{Validation scope.}
The $\beta = 0$ regime is supported by theorem, Monte Carlo, and real-data
validation; the $\beta = 1$ (QPSK) regime by theorem and Monte Carlo
(Fig.~\ref{fig:mc_crb}); OFDM/QAM predictions remain model-based only.

\begin{table}[h]
\centering
\small
\caption{Validation Status by Modulation Regime}
\label{tab:validation_scope}
\begin{tabular}{lccc}
\toprule
Regime & Theorem & Monte Carlo & Real Data \\
\midrule
$\beta=0$ (BPSK/preamble) & \checkmark & \checkmark & \checkmark \\
$\beta=1$ (QPSK) & \checkmark & \checkmark & --- \\
$\beta=1$ (16/64-QAM, OFDM) & \checkmark & --- & --- \\
\bottomrule
\end{tabular}
\end{table}

\emph{Iridium known-symbol structure.}
The 76~known symbols per burst (preamble + BPSK unique word) yield $\beta = 0$, reinforcing IQ unidentifiability. The $\beta = 1$ analysis applies to unknown data symbols and provides forward-looking predictions for OFDM systems.

\emph{Why IQ imbalance features fail.}
For constant-modulus DQPSK, IQ imbalance confines its effect to phase rotation indistinguishable from the channel ($\DR_\varepsilon < 1$), and the gr-iridium receiver~\cite{gr-iridium} partially compensates it through synchronization.

\emph{Scope and limitations of the CRB analysis.}
The CRB in Theorem~\ref{thm:fim} bounds estimation precision from raw baseband observations with known symbols. The experimental features in Section~\ref{sec:experiments} are extracted after symbol-rate decimation, CFO removal, and amplitude normalization by the gr-iridium pipeline~\cite{gr-iridium}. Two properties connect the bound to practice. First, the \emph{identifiability direction is preserved}: if a parameter is unidentifiable from raw observations ($\beta = 0$), no deterministic post-processing can create identifiability; the FIM null space is invariant under invertible transformations. This validates Prediction~2 and strengthens the identifiability-based explanation for OrbID. Second, for identifiable parameters ($\beta > 0$), the feature extraction pipeline may lose information relative to the raw CRB, so the bound is optimistic. This gap is consistent with the observed AUC~$= 0.93$ versus SatIQ's AUC~$= 0.96$, which operates on oversampled waveforms. Characterizing the exact information loss of symbol-rate processing remains an important open problem.

To quantify the pipeline dependence, we applied identical feature extraction to the SatIQ dataset~\cite{smailes2025satiq}, which provides oversampled (25~MS/s) IQ captures decimated to symbol rate via matched filtering. Despite originating from the same Iridium constellation, all features yield DR~$< 0.2$ (vs.\ DR~$> 4$ from PAST-AI), confirming that the gr-iridium decoder preserves PA-induced amplitude distortion that brute-force decimation destroys.

\emph{Comparison with existing methods.}

\begin{table*}[t]
\centering
\caption{Comparison of Satellite RF Fingerprinting Approaches$^\star$}
\label{tab:comparison}
\begin{tabular}{lcccc}
\toprule
 & SatIQ~\cite{smailes2025satiq} & PAST-AI~\cite{oligeri2023pastai} & ML baseline & This work \\
\midrule
Approach       & Neural         & CNN           & SVM/RF      & Parametric \\
Input          & 22k-dim IQ     & Symbol-rate   & 6 features  & 6 features (DR$^2$-wt) \\
Reported metric & AUC = 0.96    & Acc.\ = 85\%  & AUC $\leq$ 0.69 & AUC = 0.93 \\
Interpretable  & No             & No            & No          & Yes \\
Bounds         & None           & None          & None        & CRB \\
Failure pred.  & No             & No            & No          & Yes (OrbID) \\
\bottomrule
\multicolumn{5}{l}{\footnotesize $^\star$Metrics, inputs, and datasets differ across SatIQ and PAST-AI; ML baseline uses same data as this work.}
\end{tabular}
\end{table*}

Table~\ref{tab:comparison} compares our approach with SatIQ, PAST-AI, and the ML baselines. IWAT achieves AUC~$= 0.93$ using DR$^2$-weighted scoring on six features (Table~\ref{tab:auth}), combining multi-message accumulation ($\sqrt{N_{\mathrm{msg}}} \approx 22\times$ noise reduction) with identifiability-guided DR-weighted scoring. The SatIQ comparison is indirect as it uses oversampled captures from a different receiver.

\emph{Implications for next-generation satellites (model prediction).}
OFDM-based systems (Starlink, 5G-NTN) use 16/64-QAM with $\beta = 1$ and $\mu_6 > 1$. The CRB framework \emph{predicts} restored IQ identifiability and enhanced discrimination for these systems, with lower PA CRBs than PSK due to $\mu_6 > 1$ (Table~\ref{tab:beta}). This remains unvalidated on real non-Iridium data. The Monte Carlo validation (Fig.~\ref{fig:mc_crb}) confirms efficient estimation of all four parameters for QPSK under the current signal model. Extending the model to wideband OFDM channels and oversampled receivers is needed before these predictions can be considered operational.

\section{Conclusion}
\label{sec:conclusion}

We presented, to the best of our knowledge, the first estimation-theoretic performance analysis for satellite RF fingerprinting. The closed-form FIM (Theorem~\ref{thm:fim}) reveals that the IQ identifiability factor $\beta = 1 - |\E[x^2]|^2$ serves as a necessary condition for full IQ parameter recoverability: setting $\beta = 0$ for BPSK yields a rank-deficient FIM, showing that IQ fingerprinting is infeasible for binary-modulated satellites and providing a plausible explanation for OrbID's near-random performance on Orbcomm.

The discrimination metric correctly predicts that PA nonlinearity dominates for constant-modulus signals, validated with $r = 0.999$ cross-file stability on 24~Iridium satellites. The proposed IWAT algorithm converts these predictions into a practical authentication scoring rule: DR$^2$-weighting automatically suppresses uninformative features, achieving AUC~$= 0.934$ from six features (equal weighting: $0.807$; ML baselines: $\leq 0.69$). IQ-based authentication remains at chance level (AUC~$\approx 0.5$) regardless of accumulation, confirming the $\beta = 0$ identifiability limitation.

Future work includes extending the model to phase noise and filter response parameters, collecting oversampled data for full CRB validation across the Iridium constellation, and applying the framework to OFDM-based systems such as Starlink and 5G-NTN where the restored IQ identifiability ($\beta = 1$) and enhanced PA precision ($\mu_6 > 1$) predict improved discrimination.

\appendices

\section{Proof of Theorem~\ref{thm:fim}}
\label{app:thm1}

Substitute~\eqref{eq:rx_signal} into~\eqref{eq:fim_entry}. The derivatives of $K_1, K_2$ from~\eqref{eq:K1K2} yield sensitivity vectors $\delta_\varepsilon = \tfrac{e^{j\varphi}}{2} x - \tfrac{e^{-j\varphi}}{2} x^*$ and $\delta_\varphi = \tfrac{j(1+\varepsilon)e^{j\varphi}}{2} x + \tfrac{j(1+\varepsilon)e^{-j\varphi}}{2} x^*$ at $\alpha_3 = 0$, while $\delta_{\alpha_{3R}} = |x_{\mathrm{IQ}}|^2 x_{\mathrm{IQ}}$ and $\delta_{\alpha_{3I}} = j|x_{\mathrm{IQ}}|^2 x_{\mathrm{IQ}}$.

For i.i.d.\ symbols uniform on $\calA$, the FIM entries reduce to constellation moments $\mu_2 = \E[|x|^2] = 1$, $\mu_{20} = \E[x^2]$, $\mu_4 = \E[|x|^4]$, and $\mu_6 = \E[|x|^6]$. Evaluating the IQ diagonal entries:
\begin{align}
\E[|\delta_\varepsilon|^2] &= \tfrac{1}{2}(1 - \Re\{e^{-2j\varphi}\mu_{20}\}) = \beta_\varepsilon, \label{eq:Jee} \\
\E[|\delta_\varphi|^2] &= \tfrac{(1+\varepsilon)^2}{2}(1 + \Re\{e^{-2j\varphi}\mu_{20}\}) = (1{+}\varepsilon)^2\beta_\varphi. \label{eq:Jpp}
\end{align}
For circular constellations ($\mu_{20}=0$): $\beta_\varepsilon = \beta_\varphi = 1/2$. For BPSK ($\mu_{20}=1$, $\varphi$ small): $\beta_\varepsilon \approx 2\varphi^2 \approx 0$ while $\beta_\varphi \approx 1$, consistent with the large $[\bJ]_{\varphi\varphi}$ in Remark~\ref{rem:paradox}. The IQ cross-term $\E[\delta_\varepsilon^* \delta_\varphi]$ vanishes when $\mu_{20}=0$ and equals $\tfrac{1{+}\varepsilon}{2}\Im\{e^{-2j\varphi}\mu_{20}\}$ in general. For the PA sub-block, $\E[|\delta_{\alpha_{3R}}|^2] = \E[|x|^6] = \mu_6$ in the small-impairment limit, giving $\bJ_{\mathrm{PA}} = \mu_6 \bI_2$. For constant-modulus constellations, $\mu_6 = 1$. The PA-IQ cross-coupling at $\varepsilon = 0$ gives $\E[\delta_\varepsilon^* \cdot |x|^2 x] = \tfrac{e^{-j\varphi}}{2}\mu_4$ for circular constellations ($\E[|x|^2 x^2] = 0$ by symmetry). Collecting real parts yields~\eqref{eq:fim_block}--\eqref{eq:fim_cross}. \hfill$\square$

\bibliographystyle{IEEEtran}
\bibliography{refs}

@inproceedings{smailes2023watchthisspace,
  author    = {Smailes, Joshua and K\"{o}hler, Sebastian and Birnbach, Simon and Strohmeier, Martin and Martinovic, Ivan},
  title     = {{Watch This Space}: Securing Satellite Communication through Resilient Transmitter Fingerprinting},
  booktitle = {Proc. ACM SIGSAC Conf. Comput. Commun. Security (CCS)},
  year      = {2023},
  pages     = {608--621},
  doi       = {10.1145/3576915.3623135}
}

@article{smailes2025satiq,
  author    = {Smailes, Joshua and K\"{o}hler, Sebastian and Birnbach, Simon and Strohmeier, Martin and Martinovic, Ivan},
  title     = {{SatIQ}: Extensible and Stable Satellite Authentication using Hardware Fingerprinting},
  journal   = {ACM Trans. Privacy Security},
  year      = {2025},
  volume    = {29},
  number    = {1},
  articleno = {2},
  doi       = {10.1145/3768619}
}

@article{oligeri2023pastai,
  author    = {Oligeri, Gabriele and Sciancalepore, Savio and Raponi, Simone and Di Pietro, Roberto},
  title     = {{PAST-AI}: Physical-Layer Authentication of Satellite Transmitters via Deep Learning},
  journal   = {IEEE Trans. Inf. Forensics Security},
  year      = {2023},
  volume    = {18},
  pages     = {274--289},
  doi       = {10.1109/TIFS.2022.3219287}
}

@article{oligeri2023iridiumdata,
  author    = {Oligeri, Gabriele and Sciancalepore, Savio and Di Pietro, Roberto},
  title     = {Physical-Layer Data of {IRIDIUM} Satellites Broadcast Messages},
  journal   = {Data in Brief},
  year      = {2023},
  volume    = {46},
  pages     = {108905},
  doi       = {10.1016/j.dib.2023.108905}
}

@inproceedings{solenthaler2025orbid,
  author    = {Solenthaler, C\'{e}dric and Smailes, Joshua and Strohmeier, Martin},
  title     = {{OrbID}: Identifying Orbcomm Satellite {RF} Fingerprints},
  booktitle = {Proc. Workshop Security Space Satellite Syst. (SpaceSec)},
  year      = {2025},
  doi       = {10.14722/spacesec.2025.23031}
}

@article{zhang2025fingerprinting_survey,
  author    = {Zhang, Junqing and Ardizzon, Francesco and Piana, Mattia and Shen, Guanxiong and Tomasin, Stefano},
  title     = {Physical Layer-Based Device Fingerprinting for Wireless Security: From Theory to Practice},
  journal   = {IEEE Trans. Inf. Forensics Security},
  year      = {2025},
  volume    = {20},
  pages     = {5296--5325},
  doi       = {10.1109/TIFS.2025.3570118}
}

@inproceedings{smailes2024stickyfingers,
  author    = {Smailes, Joshua and Salkield, Edd and K\"{o}hler, Sebastian and Birnbach, Simon and Strohmeier, Martin and Martinovic, Ivan},
  title     = {{Sticky Fingers}: Resilience of Satellite Fingerprinting against Jamming Attacks},
  booktitle = {Proc. 2nd Workshop Security Space Satellite Syst. (SpaceSec)},
  year      = {2024},
  doi       = {10.14722/spacesec.2024.23054}
}

@book{kay1993fundamentals,
  author    = {Kay, Steven M.},
  title     = {Fundamentals of Statistical Signal Processing: Estimation Theory},
  volume    = {1},
  publisher = {Prentice Hall},
  year      = {1993}
}

@book{vantrees1968detection,
  author    = {{Van Trees}, Harry L.},
  title     = {Detection, Estimation, and Modulation Theory, {Part I}: Detection, Estimation, and Linear Modulation Theory},
  publisher = {John Wiley \& Sons},
  year      = {1968}
}

@book{cramer1946mathematical,
  author    = {Cram\'{e}r, Harald},
  title     = {Mathematical Methods of Statistics},
  publisher = {Princeton Univ. Press},
  year      = {1946}
}

@article{polak2011identifying,
  author    = {Polak, Adam C. and Dolatshahi, Sepideh and Goeckel, Dennis L.},
  title     = {Identifying Wireless Users via Transmitter Imperfections},
  journal   = {IEEE J. Sel. Areas Commun.},
  volume    = {29},
  number    = {7},
  pages     = {1469--1479},
  year      = {2011},
  doi       = {10.1109/JSAC.2011.110812}
}

@inproceedings{sankhe2019oracle,
  author    = {Sankhe, Kunal and Belgiovine, Mauro and Zhou, Fan and Riyaz, Shamnaz and Ioannidis, Stratis and Chowdhury, Kaushik},
  title     = {{ORACLE}: Optimized Radio clAssification through Convolutional neuraL nEtworks},
  booktitle = {Proc. IEEE INFOCOM},
  pages     = {370--378},
  year      = {2019},
  doi       = {10.1109/INFOCOM.2019.8737463}
}

@inproceedings{robyns2017physical,
  author    = {Robyns, Pieter and Marin, Eduard and Lamotte, Wim and Quax, Peter and Singel\'{e}e, Dave and Preneel, Bart},
  title     = {Physical-layer fingerprinting of {LoRa} devices using supervised and zero-shot learning},
  booktitle = {Proc. ACM Conf. Security Privacy Wireless Mobile Netw. (WiSec)},
  pages     = {58--63},
  year      = {2017},
  doi       = {10.1145/3098243.3098267}
}

@article{jian2020deep,
  author    = {Jian, Tong and Rendon, Bruno Costa and Ojuba, Emmanuel and Soltani, Nasim Yahya and Wang, Zifeng and Sankhe, Kunal and Gritsenko, Andrey and Dy, Jennifer G. and Chowdhury, Kaushik Roy and Ioannidis, Stratis},
  title     = {Deep Learning for {RF} Fingerprinting: A Massive Experimental Study},
  journal   = {IEEE Internet Things Mag.},
  volume    = {3},
  number    = {1},
  pages     = {50--57},
  year      = {2020},
  doi       = {10.1109/IOTM.0001.1900065}
}

@article{jagannath2022comprehensive,
  author    = {Jagannath, Anu and Jagannath, Jithin and Kumar, Prem Sagar Pattanshetty Vasanth},
  title     = {A comprehensive survey on radio frequency ({RF}) fingerprinting: Traditional approaches, deep learning, and open challenges},
  journal   = {Comput. Netw.},
  year      = {2022},
  volume    = {219},
  pages     = {109455},
  doi       = {10.1016/j.comnet.2022.109455}
}

@book{schenk2008rf,
  author    = {Schenk, Tim C. W.},
  title     = {{RF} Imperfections in High-Rate Wireless Systems: Impact and Digital Compensation},
  publisher = {Springer},
  year      = {2008},
  doi       = {10.1007/978-1-4020-6903-1}
}

@article{boulogeorgos2016effects,
  author    = {Boulogeorgos, Alexandros-Apostolos A. and Sofotasios, Paschalis C. and Selim, Bassant and Muhaidat, Sami and Karagiannidis, George K. and Valkama, Mikko},
  title     = {Effects of {RF} Impairments in Communications Over Cascaded Fading Channels},
  journal   = {IEEE Trans. Veh. Technol.},
  volume    = {65},
  number    = {11},
  pages     = {8878--8894},
  year      = {2016},
  doi       = {10.1109/TVT.2016.2516901}
}

@article{anttila2008circularity,
  author    = {Anttila, Lauri and Valkama, Mikko and Renfors, Markku},
  title     = {Circularity-Based {I/Q} Imbalance Compensation in Wideband Direct-Conversion Receivers},
  journal   = {IEEE Trans. Veh. Technol.},
  volume    = {57},
  number    = {4},
  pages     = {2099--2113},
  year      = {2008},
  doi       = {10.1109/TVT.2007.909269}
}

@inproceedings{rapp1991effects,
  author    = {Rapp, C.},
  title     = {Effects of {HPA}-Nonlinearity on a 4-{DPSK}/{OFDM}-Signal for a Digital Sound Broadcasting System},
  booktitle = {Proc. 2nd European Conf. Satellite Commun. (ECSC-2)},
  pages     = {179--184},
  year      = {1991}
}

@article{saleh1981frequency,
  author    = {Saleh, Adel A. M.},
  title     = {Frequency-Independent and Frequency-Dependent Nonlinear Models of {TWT} Amplifiers},
  journal   = {IEEE Trans. Commun.},
  volume    = {29},
  number    = {11},
  pages     = {1715--1720},
  year      = {1981},
  doi       = {10.1109/TCOM.1981.1094911}
}

@misc{gr-iridium,
  author       = {{Chaos Computer Club M\"{u}nchen}},
  title        = {gr-iridium: {Iridium} Burst Detector and Demodulator},
  year         = {2024},
  howpublished = {\url{https://github.com/muccc/gr-iridium}}
}

@article{luo2023ambiguity,
  author    = {Luo, Ruidan and Chen, Xiao and Wu, Zhongwang and Li, Yaping},
  title     = {Ambiguity function analysis of iridium burst transmissions for opportunistic navigation},
  journal   = {Electron. Lett.},
  year      = {2023},
  volume    = {59},
  number    = {21},
  pages     = {e12989},
  doi       = {10.1049/ell2.12989}
}

@inproceedings{maine1995iridium,
  author    = {Maine, K. and Devieux, C. and Swan, P.},
  title     = {Overview of {IRIDIUM} satellite network},
  booktitle = {Proc. WESCON},
  pages     = {483--490},
  year      = {1995},
  doi       = {10.1109/WESCON.1995.485428}
}

@inproceedings{danev2010attacks,
  author    = {Danev, Boris and Luecken, Heinrich and Capkun, Srdjan and {El Defrawy}, Karim},
  title     = {Attacks on Physical-layer Identification},
  booktitle = {Proc. ACM Conf. Wireless Netw. Security (WiSec)},
  year      = {2010},
  pages     = {89--98},
  doi       = {10.1145/1741866.1741882}
}

@techreport{3gpp_tr38811,
  author      = {{3GPP}},
  title       = {Study on {New Radio} ({NR}) to support non-terrestrial networks},
  institution = {3rd Generation Partnership Project},
  number      = {TR 38.811 v15.4.0},
  year        = {2020}
}

@article{loo1985statistical,
  author    = {Loo, C.},
  title     = {A statistical model for a land mobile satellite link},
  journal   = {IEEE Trans. Veh. Technol.},
  volume    = {VT-34},
  number    = {3},
  pages     = {122--127},
  year      = {1985},
  doi       = {10.1109/T-VT.1985.24048}
}

@article{bhattacharyya1943measure,
  author    = {Bhattacharyya, A.},
  title     = {On a measure of divergence between two statistical populations defined by their probability distributions},
  journal   = {Bull. Calcutta Math. Soc.},
  volume    = {35},
  pages     = {99--109},
  year      = {1943}
}

@book{casella2002statistical,
  author    = {Casella, George and Berger, Roger L.},
  title     = {Statistical Inference},
  edition   = {2nd},
  publisher = {Duxbury/Thomson Learning},
  year      = {2002}
}

@article{xu2016device,
  author    = {Q. Xu and R. Zheng and W. Saad and Z. Han},
  title     = {Device Fingerprinting in Wireless Networks: Challenges and Opportunities},
  journal   = {IEEE Commun. Surveys Tuts.},
  volume    = {18},
  number    = {1},
  pages     = {94--104},
  year      = {2016},
  doi       = {10.1109/COMST.2015.2476338}
}

@article{senigagliesi2021comparison,
  author    = {Senigagliesi, Linda and Baldi, Marco and Gambi, Ennio},
  title     = {Comparison of Statistical and Machine Learning Techniques for Physical Layer Authentication},
  journal   = {IEEE Trans. Inf. Forensics Security},
  volume    = {16},
  pages     = {1506--1521},
  year      = {2021},
  doi       = {10.1109/TIFS.2020.3033454}
}

@article{Zhang2021RFFI,
  author   = {J. Zhang and R. Woods and M. Sandell and M. Valkama and A. Marshall and J. Cavallaro},
  title    = {Radio Frequency Fingerprint Identification for Narrowband Systems, Modelling and Classification},
  journal  = {IEEE Trans. Inf. Forensics Security},
  volume   = {16},
  pages    = {3974--3987},
  year     = {2021},
  doi      = {10.1109/TIFS.2021.3088008},
}

@article{Shen2022LoRa,
  author   = {G. Shen and J. Zhang and A. Marshall and J. R. Cavallaro},
  title    = {Towards Scalable and Channel-Robust Radio Frequency Fingerprint Identification for {LoRa}},
  journal  = {IEEE Trans. Inf. Forensics Security},
  volume   = {17},
  pages    = {774--787},
  year     = {2022},
  doi      = {10.1109/TIFS.2022.3152404},
}

@article{Shen2023LengthVersatile,
  author   = {G. Shen and J. Zhang and A. Marshall and M. Valkama and J. R. Cavallaro},
  title    = {Toward Length-Versatile and Noise-Robust Radio Frequency Fingerprint Identification},
  journal  = {IEEE Trans. Inf. Forensics Security},
  volume   = {18},
  pages    = {2355--2367},
  year     = {2023},
  doi      = {10.1109/TIFS.2023.3266626},
}

@article{Luo2025ChannelRobust5G,
  author   = {H. Luo and G. Li and A. Brighente and M. Conti and Y. Xing and A. Hu and X. Wang},
  title    = {Channel-Robust {RF} Fingerprint Identification for Multi-Antenna {5G} User Equipments},
  journal  = {IEEE Trans. Inf. Forensics Security},
  volume   = {20},
  pages    = {10761--10776},
  year     = {2025},

}

@article{Rajendran2022RFImpairment,
  author   = {S. Rajendran and Z. Sun},
  title    = {{RF} Impairment Model-Based {IoT} Physical-Layer Identification for Enhanced Domain Generalization},
  journal  = {IEEE Trans. Inf. Forensics Security},
  volume   = {17},
  pages    = {1285--1299},
  year     = {2022},
  doi      = {10.1109/TIFS.2022.3158553},
}

@book{Kay1998Detection,
  author    = {S. M. Kay},
  title     = {Fundamentals of Statistical Signal Processing: Detection Theory},
  publisher = {Prentice Hall},
  address   = {Upper Saddle River, NJ},
  year      = {1998},
  volume    = {2},
}

@book{Stoica2005Spectral,
  author    = {P. Stoica and R. L. Moses},
  title     = {Spectral Analysis of Signals},
  publisher = {Pearson Prentice Hall},
  address   = {Upper Saddle River, NJ},
  year      = {2005},
}

@book{Lehmann1998PointEstimation,
  author    = {E. L. Lehmann and G. Casella},
  title     = {Theory of Point Estimation},
  edition   = {2nd},
  publisher = {Springer-Verlag},
  address   = {New York},
  year      = {1998},
  series    = {Springer Texts in Statistics},
}

@article{Yue2023LEOSecurity,
  author   = {P. Yue and J. An and J. Zhang and J. Ye and G. Pan and S. Wang and P. Xiao and L. Hanzo},
  title    = {Low {Earth} Orbit Satellite Security and Reliability: Issues, Solutions, and the Road Ahead},
  journal  = {IEEE Commun. Surveys Tuts.},
  volume   = {25},
  number   = {3},
  pages    = {1604--1652},
  year     = {2023},
  doi      = {10.1109/COMST.2023.3296160},
}

@inproceedings{Topal2022LEOPLA,
  author    = {O. A. Topal and G. {Karabulut Kurt}},
  title     = {Physical Layer Authentication for {LEO} Satellite Constellations},
  booktitle = {Proc. IEEE Wireless Commun. Netw. Conf. (WCNC)},
  address   = {Austin, TX, USA},
  month     = apr,
  year      = {2022},
  pages     = {1952--1957},
  doi       = {10.1109/WCNC51071.2022.9771727},
}

@article{Lohan2021GNSSSpoofing,
  author   = {E. S. Lohan and R. T. Iacob and O. Enescu},
  title    = {A Survey of Spoofer Detection Techniques via Radio Frequency Fingerprinting with Focus on the {GNSS} Pre-Correlation Sampled Data},
  journal  = {Sensors},
  volume   = {21},
  number   = {9},
  pages    = {3012},
  year     = {2021},
  doi      = {10.3390/s21093012},
}

@inproceedings{Hanna2019PAFingerprint,
  author    = {S. S. Hanna and D. Cabric},
  title     = {Deep Learning Based Transmitter Identification using Power Amplifier Nonlinearity},
  booktitle = {Proc. IEEE Int. Conf. Computing, Networking and Commun. (ICNC)},
  address   = {Honolulu, HI, USA},
  month     = feb,
  year      = {2019},
  pages     = {674--680},
  doi       = {10.1109/ICCNC.2019.8685569},
  note      = {Best Paper Award},
}

@article{Liu2023mmWavePLA,
  author   = {Y. Liu and P. Zhang and J. Liu and Y. Shen and X. Jiang},
  title    = {Exploiting Fine-Grained Channel/Hardware Features for {PHY}-Layer Authentication in {MmWave} {MIMO} Systems},
  journal  = {IEEE Trans. Inf. Forensics Security},
  volume   = {18},
  pages    = {4059--4074},
  year     = {2023},

}

@article{Xie2021MultiFeaturePLA,
  author   = {N. Xie and J. Chen and L. Huang},
  title    = {Physical-Layer Authentication Using Multiple Channel-Based Features},
  journal  = {IEEE Trans. Inf. Forensics Security},
  volume   = {16},
  pages    = {2356--2366},
  year     = {2021},
  doi      = {10.1109/TIFS.2021.3054534},
}

@article{Zhang2021PhaseNoisePLA,
  author   = {P. Zhang and J. Liu and Y. Shen and X. Jiang},
  title    = {Exploiting Channel Gain and Phase Noise for {PHY}-Layer Authentication in Massive {MIMO} Systems},
  journal  = {IEEE Trans. Inf. Forensics Security},
  volume   = {16},
  pages    = {4265--4279},
  year     = {2021},
  doi      = {10.1109/TIFS.2020.3030267},
}

\end{document}